\newcounter{todocounter}
\newcommand{\todonum}{\stepcounter{todocounter}{(\thetodocounter)}}
\def\shownotes{1}   
\newcommand{\authnote}[2]{{ $\ll$\textsf{\footnotesize \todonum\  #1 notes:  #2}$\gg$}}
\newcommand{\authnote}[2]{}
\newcommand{\gs}[1]{{\color{magenta}\authnote{Gilad}{#1}}}
\newcommand{\ia}[1]{{\color{red}\authnote{Ittai}{#1}}}
\newcommand{\ignore}[1]{}
\newcommand{\hide}[1]{}
\begin{document}

\author{Ittai Abraham}
\affiliation{VMWare Research}
\author{Gilad Stern}
\affiliation{Hebrew University}

\algdef{SN}{AsLongAs}{EndAsLongAs}[1]{\textbf{as long as} #1, \textbf{run}}
\algdef{SN}{AtTime}{EndAtTime}[1]{\textbf{at time} #1 \textbf{do}}
\algdef{SN}{Upon}{EndUpon}[1]{\textbf{upon} #1, \textbf{do}}
\algdef{SN}{Wait}{EndWait}[1]{\textbf{wait until} #1, \textbf{then do}}
\newcommand{\viewLength}{\ensuremath{11\Delta} }


\title{Information Theoretic HotStuff}

\begin{abstract}
This work presents Information Theoretic HotStuff (IT-HS), a new optimally resilient protocol for solving Byzantine Agreement in partial synchrony with information theoretic security guarantees. In particular, IT-HS does not depend on any PKI or common setup assumptions and is resilient to computationally unbounded adversaries.
IT-HS is based on the Primary-Backup view-based paradigm.  In IT-HS, in each view, and in each view change, each party sends only a constant number of words to every other party.
This yields an $O(n^2)$ word and message complexity in each view.
In addition, IT-HS requires just $O(1)$ persistent local storage and $O(n)$ transient local storage.
Finally,  like all Primary-Backup view-based protocols in partial synchrony, after the system becomes synchronous, all nonfaulty parties decide on a value in the first view a nonfaulty leader is chosen.
Moreover, like PBFT and HotStuff, IT-HS is optimistically responsive: with a nonfaulty leader, parties decide as quickly as the network allows them to do so, without regard for the known upper bound on network delay.
Our work improves in multiple dimensions upon the information theoretic version of PBFT presented by Miguel Castro, and can be seen as an information theoretic variant of the HotStuff paradigm.
\end{abstract}

\maketitle

\section{Introduction}
This work assumes the model of Castro and Liskov's PBFT protocol \cite{cachin2010visitpaxos,castro2001PBFT,castro1999practical}. 
In particular we deal with the task of Byzantine Agreement in a partially synchronous network.
The setting of partial synchrony was proposed by Dwork, Lynch, and Stockmeyer \cite{DLS} and studied extensively since.
In this model, the network starts off as an asynchronous network and at some unknown time becomes synchronous with a known delay $\Delta$ on message arrival.
This time is known as the Global Stabilization Time, or GST in short.
This model turns out to be a useful one,  managing to capture some of the behaviour of real-world networks.
As in PBFT, our goal in this work is to reduce the use of cryptographic tools that require a computationally bounded adversary as much as possible. 
Much like PBFT, our algorithm is \textit{information theoretically secure}. 
Formally, as in PBFT \cite{cachin2010visitpaxos,castro2001PBFT, castro1999practical}, our protocol is secure against adversaries that are not computationally bounded under the assumption that there exist authenticated channels that can be made secure against such adversaries. 
For example, authenticated channels can be obtained via a setup of one time pads or via Quantum key exchange \cite{Bennett1992}.

There are several good reasons to design protocols in the information theoretic security setting.
First, from a theoretical perspective we are interested in  minimizing the assumptions. 
Fewer assumptions often tend to add clarity and conceptual simplicity.
Secondly, adding public-key cryptography primitives adds a performance overhead and increases the code-base attack surface, whereas computations in the information-theoretic setting are quick and often amount to simple memory management and counting.
Finally, protocols in this setting are more ``future-proof''.
Such protocol are more resilient to breaking certain cryptographic assumptions and to major technological disruptions in the field.

The PBFT variants that use a PKI and digital signatures can easily use bounded storage at each party (per active slot).
One of the challenges of the PBFT protocol when only authenticated channels (no signatures) are used is that obtaining bounded storage is not immediate. Indeed all the peer reviewed papers that we are aware of obtain unbounded solutions \cite{cachin2010visitpaxos,pbft_tocs}. Castro's thesis \cite{castro2001PBFT} does include a bounded storage solution, however to the best of our knowledge this result was not published in a peer reviewed venue, and its complexity does rely on cryptographic hash functions.

\subsection{Main result}
Our main result is \textit{Information Theoretic HotStuff} (IT-HS), a protocol solving the task of Byzantine Agreement in partial synchrony with information theoretic security using bounded storage that sends messages whose maximal size is $O(1)$ words (both during a view and during a view change).
The protocol is resilient to any number of Byzantine parties $f$ such that $n>3f$, making it optimally resilient.
In the protocol, there are several virtual rounds called views, and each one has a leader, called a primary.
This is a common paradigm for solving Byzantine agreement, famously used in the Paxos protocol \cite{lamport2001paxos} and in later iterations on those ideas such as PBFT \cite{cachin2010visitpaxos,castro2001PBFT,zyzzyva} and more recent protocols in the Blockchain era \cite{buchman2016thesis, buchman2018tendermint, buterin2017casper,gueta2018sbft,abraham2019hotstuff}.
We use a standard measure of storage called a \textit{word} and assume a word can contain enough information to store any command, identifier,  or counter. 
Formally, this means that much like in all previous systems and protocols, our counters, identifiers, and views are bounded (by say 256 bits).
In IT-HS, in each view and in each view change, each party sends just a constant number of words and messages to each other party, making the total word and message complexity $O(n^2)$ in each view and in each view change.
As far as we know, this is the best known communication complexity and word complexity for information theoretic protocols of this kind (see table below for comparison).
In addition, all parties require $O(n)$ space throughout the protocol, out of which only $O(1)$ space needs to be persistent, crash-resistant memory.
Clearly at least $O(1)$ persistent memory is required, because otherwise a decided upon value can be ``forgotten'' by all parties if they crash and reboot.
As far as we know, $O(n)$ transient space complexity is the best known result.
In the shared memory model, a lower bound of $\Omega(n)$ registers exists \cite{gelashvili2015optimal}, suggesting that the total amount of persistent memory in the system is optimal. 

In IT-HS, all nonfaulty parties are guaranteed to decide on a value and terminate during the first view a nonfaulty party is chosen as primary after GST, if they haven't done so earlier.
This is the asymptotically optimal convergence for such protocols: For deterministic leader rotation this implies $O(f)$ rounds after GST. If we assume that  parties have access to a randomized leader-election beacon, then this implies $O(1)$ expected rounds after GST.
Furthermore, like PBFT and HotStuff, IT-HS is \textit{optimistically responsive}. If the network delay is actually $\delta=o\left(\Delta\right)$, all nonfaulty parties terminate in $O\left(\delta\right)$ time instead of in $O\left(\Delta\right)$ time.
IT-HS uses an \emph{asymptotically} optimal (constant) number of rounds given a nonfaulty primary and after the network becomes synchronous.

The most relevant related works for IT-HS are the PBFT protocol variants \cite{cachin2010visitpaxos,castro2001PBFT,pbft_tocs,castro1999practical} and the HotStuff protocol variants \cite{abraham2019hotstuff}. The following table provides a comparison between them.
\begin{center}
\begin{tabular}{lccc} 
\hline
  & Assumptions & Persistent storage & \makecell{Maximum size of\\ message (in words)}\\ 
\hline\hline 
PBFT (OSDI) \cite{castro1999practical} & {\color{red} PKI}& $\Omega(n)$ & $O(n)$ \\ 
\hline
PBFT (TOCS) \cite{pbft_tocs}& \makecell{Authenticated Channels,\\ {\color{red} Cryptographic Hash}} & \makecell{$\Omega(n)$ per view\\ {\color{red} (unbounded)}} &  \makecell{$\Omega(n)$ per view\\ {\color{red} (unbounded)}} \\ 
\hline
PBFT (Thesis) \cite{castro2001PBFT} & \makecell{Authenticated Channels,\\ {\color{red} Cryptographic Hash}} & $O(1)$ & $O(n)$\\ 
\hline
YAVP (Cachin) \cite{cachin2010visitpaxos} & \makecell{Authenticated Channels, \\ {\color{red} Cryptographic Hash}} &  
\makecell{$\Omega(n)$ per view\\ {\color{red} (unbounded)}} &  \makecell{$\Omega(n)$ per view\\ {\color{red} (unbounded)}} \\ 
\hline
\makecell{HotStuff \\ (authenticators)  \cite{abraham2019hotstuff}}& {\color{red} PKI} & $O(n)$ & $O(n)$ \\ 
\hline
\makecell{HotStuff \\ (threshold sig) \cite{abraham2019hotstuff}} & \makecell{{\color{red} DKG: Threshold}\\ {\color{red} signature setup}} & $O(1)$ & \makecell{$O(1)$\\ ({\color{red} threshold sig}) }  \\ 
\hline
\textbf{IT-HS} (this work) & Authenticated Channels  &  $O(1)$ & $O(1)$  \\ 
\hline
\end{tabular}
\end{center}

As mentioned earlier, all previous peer-reviewed works in the information theoretic setting require at least $\Omega(n \cdot v)$ words of storage, where $v$ is the view number. Since the view number can grow arbitrarily large, the persistent storage requirement is unbounded. 
The only work we know of that achieves comparable asymptotic performance relies on the relatively strong cryptographic assumption of threshold signatures.

We note that IT-HS does not only use fewer assumptions (does not use any cryptographic hash function), it also obtains the asymptotically optimal $O(1)$ word bound on the maximal message size. 
All other protocols require at least $\Omega(n)$ size messages to be sent during view change by the primary (except for Hotstuff when using a Distributed Key Generation setup and threshold signatures).

Compared to PBFT, our work can be seen as addressing the open problem left in the PBFT journal version (which uses unbounded space and cryptographic hash functions) and is an improvement of the non peer-reviewed PBFT thesis work (which still uses cryptographic hash functions). IT-HS obtains the same $O(1)$ persistent space,  and manages to reduce the maximum message size from $O(n)$ (in the PBFT view change) to the asymptotically optimal $O(1)$ maximum message size and requires no cryptographic hash functions.

Relative to HotStuff, our work shows that without any PKI (public key infrastructure) or DKG (distributed key generation) assumptions and without any cryptographic setup ceremony, constant size messages and constant size persistent storage are possible! 
We do note that IT-HS requires $O(n^2)$ messages and words per view, while the Hostuff version with a DKG setup that uses threshold signatures requires just $O(n)$ messages and words per view. 
On the other hand, HS-IT requires no cryptographic setup ceremony and no computational assumptions other than pairwise authenticated channels. 
Like HS-IT, all other protocols that do not use threshold signatures (even those that require a PKI) use $\Omega(n^2)$ words per view.

\subsubsection*{Our contributions}

\begin{enumerate}
    \item Unlike previous solutions which used cryptographic hash functions and required $O(n)$ sized messages, We provide the first information theoretic primary backup protocol where all messages have size $O(1)$ and storage is bounded to size $O(1)$.
    \item We manage to reduce the size of the view change messages to a constant by adapting the HosStuff paradigm without using any cryptographic primitives. 
    We introduce an information theoretic technique for one-transferable signatures to maintain bounded space and adopt the view change protocol accordingly.
    \item Without using any cryptographic primitives, we obtain a protocol that requires just a constant amount of persistent storage. 
    We use information theoretic techniques that require storing just the last two events from each message type. 
\end{enumerate}

\subsection{Main Techniques}
As the name might suggest, IT-HS is inspired by the Tendermint, Casper, and HotStuff protocols \cite{buchman2016thesis, buchman2018tendermint, buterin2017casper, abraham2019hotstuff} and adapts them to the information theoretic setting.
We show how to adapt the  \textit{lock and key} mechanism which was suggested in HotStuff \cite{abraham2019hotstuff} and made explicit in \cite{abraham2019vaba}, to the information theoretic setting while maintaining just  $O(1)$ persistent storage.
In a basic \textit{locking} mechanism \cite{buchman2016thesis,DLS}, before nonfaulty parties decide on a value, they set a ``lock'' that doesn't allow them to respond to primaries suggesting values from older views.
Then, before deciding on a value, nonfaulty parties require a proof that enough parties are locked on the current view.
This ensures that if some value is decided upon, there will be a large number of nonfaulty parties that won't be willing to receive messages from older views, and thus this will remain the only viable value in the system.

The challenge with the locking mechanism is that the adversary can cause nonfaulty locked parties to block nonfaulty primaries, unless the primary waits for all nonfaulty parties to respond.
To overcome this, an additional round is added so that a nonfaulty locked party guarantees that there is a sufficient number of nonfaulty parties with a \textit{key}. 
When a new primary is chosen, it waits for just $n-f$ parties to send their highest keys, and uses the highest one it receives.

The challenge with using a key is verifying its authenticity.
In the cryptographic setting, this  is easily done using signatures.
In the information theoretic setting, verification is more challenging. One approach is using Bracha's Broadcast \cite{bracha1987broadcast} in order to prove that the key received by the primary will also be accepted by the other parties.
Since there is no indication of termination in Bracha's Broadcast, there is a need to maintain an unbounded number of broadcast instances (one for each view). Using such techniques requires an unbounded amount of space. 

To overcome this challenge with bounded space, we propose a novel approach of using  \textit{one-hop transferable} proofs.
If before moving to the next round, a nonfaulty party hears from $n-f$ parties, then it knows it heard from at least $f+1$ nonfaulty parties.
This means that once the system becomes synchronous, every party will hear from those $f+1$ parties and know that at least one of them is nonfaulty.
We use this type of ``one-hop transferable proof'' twice so we have 3 key messages instead of one, each proving that the next key (or lock) is correct, and that this fact can be proven to other parties, thereby ensuring liveness.

%

In order to send just a constant number of words, we send just the last two times that the value of the $key$ was updated.
If the final update to $key$ happened after a lock was set, and its value is different than the lock's value, then the lock is safe to open.
Otherwise, if the older of the two updates was after the lock's view then at least in one of those times it was updated to a value other than the lock's value, and thus the lock is also safe to open. 
Using this idea, parties can also prove to a primary that a $key3$ suggestion is safe.
In this case, the parties either show a later view in which the same value was set for $key2$, or two later views in which the value of $key2$ was updated.
This proof shows that any previous lock either has the same value as $key3$, or can be opened safely regardless of its value.
The idea of storing just two lock values appears in Castro's Thesis \cite{castro2001PBFT}, we significantly extend this technique to use  our novel  \textit{one-hop transferable}  information theoretic ``signatures'' combined with the HotStuff keys-lock approach.


\subsection{Protocol Overview}
Much like all primary backup protocols, each view of IT-HS consists of a constant number of rounds. 
Each party waits to receive $n-f$ round $i$ messages before it sends a round $i+1$ message (in some rounds there are additional checks). 
Much like PBFT, each round involves an all-to-all message sending format. %
Throughout the protocol, parties may set a lock for a given view and value.
This lock indicates that any proposal for a different $view, value$ pair should not be accepted without ample proof that another value reached advanced stages in a later view.
In order to provide that proof, the parties send a $proof$ message that helps convince parties with locks to accept messages about a different value if appropriate.

The rounds of IT-HS for a given view can be partitioned into 4 parts:
\begin{enumerate}
\item View Change: parties first send a $request$ message, indicating that they started the view. 
Once parties hear the $request$ message sent by the primary, they respond with their current suggestion for a value to propose, as well as the view in which this suggestion originated, and additional data which will help validate all nonfaulty parties' suggestions (proofs). 
After receiving those suggestions, the primary checks whether each suggestion is valid, and once it sees $n-f$ valid suggestions, it sends a $propose$ message for the one that originated in the most recent view.

\item Propose message round:  this is where a party checks a proposal relative to its lock. 
Each party checks if it's locked on the same value as the one proposed, or convinced to override its lock by $f+1$ proof messages. If that is the case, it responds by sending an $echo$ message.

\item Key message rounds: this is where a key is created that can be later used to unlock parties.
After receiving $n-f$ $echo$ messages with the same value, parties send a $key1$ message with that value.
After receiving $n-f$ $key1$ messages with the same value they send a $key2$ message. 
After receiving $n-f$ $key2$ messages with the same value they send a $key3$ message. 
We use these three rounds in order to obtain transferable information theoretic signatures on the key message.

\item Lock and commit rounds: After receiving $n-f$ $key3$ messages with the same value they \textit{lock} on it and send a $lock$ message. 
After receiving $n-f$ $lock$ messages with the same value they \textit{commit} and send a $done$ message.  
\end{enumerate}

Before sending a $key1$ message, the local $key1$, $key1\_val$ and $prev\_key1$ fields are updated.
These fields contain the last view in which a $key1$ message was sent, its value, and the last view a $key1$ message was sent with a different value.
Similar updates take place for the other $key$ fields and the $lock$ fields.
The $echo$, $lock$ and various $key$ messages are tagged with the current view, while the $done$ message is a protocol-wide message and isn't related to a specific view.
Similarly to the mechanism in Bracha Broadcast \cite{bracha1987broadcast}, after receiving $f+1$ $done$ messages, the message is echoed, and after receiving $n-f$ messages it is accepted and the parties decide and terminate.
If a party sees that this view takes more than the expected time, it sends an $abort$ message for the view.
The same $f+1$ threshold for echoing the $abort$ message and $n-f$ threshold for moving to the next view are implemented in order to achieve the same properties.
In order to avoid buffering $request$ and $abort$ messages, only the messages with the highest view $v$ are actually stored and are understood as a $request$ or $abort$ message for any view up to $v$.

\section{Byzantine Agreement in Partial Synchrony}
This section deals with the task of Byzantine Agreement in a partially synchronous system.
In this model, there exist $n$ parties who have local clocks and authenticated point-to-point channels to every other party.
The system starts off fully asynchronous: the clocks are not synchronized, and every message can be delayed any finite amount of time before reaching its recipient.
At some point in time, the system becomes fully synchronous: the clocks become synchronized, and every message (including the ones previously sent) arrives in $\Delta$ time at most, for some commonly known $\Delta$.
It is important to note that even though it is guaranteed that the system eventually becomes synchronous, the parties do not know when it is going to happen, or even if it has already happened.
The point in time in which the system becomes synchronous is called the Global Stabilization Time, or GST in short.
In the setting of a Byzantine adversary, the adversary can control up to $f$ parties, making them arbitrarily deviate from the protocol.
In general, throughout this work assume that $f<\frac{n}{3}$.

\begin{definition}
A Byzantine Agreement protocol in partial synchrony has the following properties:
\begin{itemize}
    \item \textbf{Termination.} If all nonfaulty parties participate in the protocol, they all eventually decide on a value and terminate.
    \item \textbf{Correctness.} If two nonfaulty parties decide on values $val,val'$, then $val=val'$.
    \item \textbf{Validity.} If all parties are nonfaulty and they all have the same input $val$, then every nonfaulty party that decides on a value does so with the value $val$.
\end{itemize}
\end{definition}

We note that if we assume the parties have access to an external validity function, as described in \cite{cachin2001validity}, this protocol can be easily adjusted to have external validity.
In this setting, the external validity function defines which values are ``valid'', and all nonfaulty parties are required to output a valid value.
The only adjustment needed is for parties to also check if a value is valid before sending an $echo$ message.

\begin{algorithm}[t]
\caption{IT-HS}\label{alg:BA}
Code for party $i$ with input $x_i$:
\begin{algorithmic}[1]
\State $lock\gets 0,lock\_val\gets x_i$
\State $key3\gets 0$, $key3\_val\gets x_i$
\State $key2\gets 0, key2\_val\gets x_i,prev\_key2\gets -1$
\State $key1\gets 0, key1\_val\gets x_i,prev\_key1\gets-1$
\State $view\gets 0$
\State $\forall j\in\left[n\right]\ highest\_request\left[j\right]\gets 0$
\State continually run $check\_progress()$ in the background
\While {true} \Comment{memory from last $process\_messages$ and $view\_change$ calls is freed}
    \State $cur\_view\gets view$
    \AsLongAs{$cur\_view= view$}
        \AtTime {$cur\_time() + \viewLength$}
            \State \label{line:abortMessage} send an $\left<abort,view\right>$ message to all parties
        \EndAtTime
        \State ignore messages from other views, other than $abort$, $done$ and $request$ messages
        \State $primary\gets (view\mod n) + 1$
        \State continually run $process\_messages(view)$ in the background
        \State $view\_change(view,primary)$
    \EndAsLongAs
\EndWhile
\end{algorithmic}
\end{algorithm}

\begin{algorithm}[t]
\caption{view\_change(view,primary)}
Code for party $i$:
\begin{algorithmic}[1]
\State send $\left<request,view\right>$ to all parties $j\in\left[n\right]$
\Upon{$highet\_request\left[primary\right]=view$}
    \State send $\left<suggest,key3,key3\_val,key2,key2\_val,prev\_key2,view\right>$ to $primary$
\EndUpon
\State $send\_all\_upon\_join(\left<proof,key1,key1\_val,prev\_key1,view\right>)$
\If {$primary=i$}
    \State $suggestions\gets\emptyset$
    \State $key2\_proofs\gets\emptyset$
    \Upon{receiving the first $\left<suggest,k3,v3,k2,v2,pk2,view\right>$ message from $j$}
        \If{$pk2<k2<view$}
            \State add $\left(k2,v2,pk2\right)$ to $key2\_proofs$
        \EndIf
        \If{$k3=0$}
            \State add $\left(k3,v3\right)$ to $suggestions$
        \ElsIf {$k3<view$}
            \Upon{$accept\_key\left(k3,v3,key2\_proofs\right)=true$}
                \State add $\left(k3,v3\right)$ to $suggestions$
            \EndUpon
        \EndIf
    \EndUpon
    \Wait{$\left|suggestions\right|\geq n-f$}
        \State let $(k,v)\in suggestions$ be some tuple such that $\forall \left(k',v'\right)\in suggestions\ k'\leq k$ 
        \State $send\_all\_upon\_join(\left<propose,k,v,view\right>)$
    \EndWait
\EndIf
\end{algorithmic}
\end{algorithm}

\begin{algorithm}[t]
\caption{accept\_key(key,value,proofs)}\label{alg:checkKey2}
\begin{algorithmic}[1]
\State $supporting\gets 0$
\ForAll{$\left(k,v,pk\right)\in proofs$}
    \If{$key\leq pk$}
        \State $supporting\gets supporting+1$
    \ElsIf{$key\leq k\land value=v$}
        \State $supporting\gets supporting+1$
    \EndIf
\EndFor
\If{$supporting\geq f+1$}
    \State return $true$
\Else
    \State return $false$
\EndIf
\end{algorithmic}
\end{algorithm}

\begin{algorithm}[t]
\caption{send\_all\_upon\_join(message)}
Code for party $i$:
\begin{algorithmic}[1]
\ForAll{parties $j\in\left[n\right]$}
    \Upon{$highest\_request\left[j\right]=view$}
        \State send $message$ to party $j$
    \EndUpon
\EndFor
\end{algorithmic}
\end{algorithm}

\begin{algorithm}[t]
\caption{check\_progress()}
Code for party $i$:
\begin{algorithmic}[1]
\State $\forall j\in\left[n\right]\  highest\_abort\left[j\right]\gets 0$
\Upon{receiving a $\left<request,v\right>$ message from party $j$}
    \If{$highest\_request\left[j\right]<v$}
        \State $highest\_request[j]\gets v$
    \EndIf
\EndUpon
\Upon{receiving a $\left<done,val\right>$ message from $f+1$ parties with the same $val$}
    \If{no $done$ message has been previously sent}
        \State \label{line:doneEcho} send $\left<done,val\right>$ to every party $j\in\left[n\right]$
    \EndIf
\EndUpon

\Upon{receiving a $\left<done,val\right>$ message from $n-f$ parties with the same $val$}
    \State \textbf{decide} val and \textbf{terminate}
\EndUpon

\Upon{receiving an $\left<abort, v\right>$ message from party $j$}
    \If{$highest\_abort\left[j\right]<v$}
        \State $highest\_abort\left[j\right]\gets v$
        \State let $u$ be the $f+1$'th largest value in $highest\_abort$
        \If{$u> highest\_abort\left[i\right]$}
            \State \label{line:abortEcho} send $\left<abort,u\right>$ to every party $j\in\left[n\right]$
            \State $highest\_abort\left[i\right]\gets u$
        \EndIf
        \State let $w$ be the $n-f$'th largest value in $highest\_abort$
        \If{$w\geq view$}
            \State $view\gets w+1$
        \EndIf
    \EndIf
\EndUpon
\end{algorithmic}
\end{algorithm}

\begin{algorithm}[t]
\caption{process\_messages(view)}
Code for party $i$:
\begin{algorithmic}[1]
\State $proofs\gets\emptyset$
\Upon{receiving the first $\left<proof,k1,v1,pk1,view\right>$ message from $j$}
    \If{$view>k1>pk1$}
        \State add $\left(k1,v1,pk1\right)$ to $proofs$
    \EndIf
\EndUpon
\Upon{receiving the first $\left<propose,key,val,view\right>$ message from $primary$}
    \If{$lock=0\lor val=lock\_val$}
            \State $send\_all\_upon\_join(\left<echo,val,view\right>)$
    \ElsIf{$view>key\geq lock$}
        \Upon{$open\_lock\left(proofs\right)=true$}
            \State $send\_all\_upon\_join(\left<echo,val,view\right>)$
        \EndUpon
    \EndIf
\EndUpon
\Upon{receiving an $\left<echo,val,view\right>$ message from $n-f$ parties with the same $val$}
    \State $send\_all\_upon\_join(\left<key1,val,view\right>)$
    \If{$key1\_val\neq val$}
        \State $prev\_key1\gets key1,key1\_val\gets val$
    \EndIf
    \State $key1\gets view$
\EndUpon
\Upon{receiving a $\left<key1,val,view\right>$ message from $n-f$ parties with the same $val$}
    \State $send\_all\_upon\_join(\left<key2,val,view\right>)$
    \If{$key2\_val\neq val$}
        \State $prev\_key2\gets key2,key2\_val\gets val$
    \EndIf
    \State $key2\gets view$
\EndUpon
\Upon{receiving a $\left<key2,val,view\right>$ message from $n-f$ parties with the same $val$}
    \State $send\_all\_upon\_join(\left<key3,val,view\right>)$ 
    \State $key3\gets view,key3\_val\gets val$
\EndUpon
\Upon{receiving a $\left<key3,val,view\right>$ message from $n-f$ parties with the same $val$}
    \State $send\_all\_upon\_join(\left<lock,val,view\right>)$ to every party $j\in\left[n\right]$
    \State $lock\gets view,lock\_val\gets val$
\EndUpon
\Upon{receiving a $\left<lock,val,view\right>$ message from $n-f$ parties with the same $val$}
    \If{no $done$ message has been previously sent}
        \State \label{line:doneMessage} send $\left<done,val\right>$ to every party $j\in\left[n\right]$
    \EndIf
\EndUpon
\end{algorithmic}
\end{algorithm}

\begin{algorithm}[t]
\caption{open\_lock(proofs)}\label{alg:checkKey1}
Code for party $i$:
\begin{algorithmic}[1]
\State $supporting\gets 0$
\ForAll{$\left(k,v,pk\right)\in proofs$}
    \If{$lock\leq pk$}
        \State $supporting\gets supporting+1$
    \ElsIf{$lock\leq k\land v\neq lock\_val$}
        \State $supporting\gets supporting+1$
    \EndIf
\EndFor
\If{$supporting\geq f+1$}
    \State return $true$
\Else
    \State return $false$
\EndIf
\end{algorithmic}
\end{algorithm}

The main goal of this section is to show that Algorithm~\ref{alg:BA} is a Byzantine Agreement protocol in partial synchrony resilient to $f<\frac{n}{3}$ Byzantine parties.
For ease of discussion, a party is said to perform an action ``in view $v$'' if when it performed the action its local $view$ variable equaled $v$.
In addition, we define the notion of messages ``supporting'' a key or opening a lock:
\begin{definition}
    A $suggest$ message is said to support the pair $key3,key3\_val$, if its $key2$, $key2\_val$, and $prev\_key2$  fields are ones for which at least one of the conditions in the loop of Algorithm~\ref{alg:checkKey2} is true.
    
    A $proof$ message is said to support opening the pair $lock,lock\_val$ if its $key1$, $key1\_val$, and $prev\_key1$ fields are ones for which at least one of the conditions in the loop of Algorithm~\ref{alg:checkKey1} is true.
\end{definition}

Before proving that Algorithm~\ref{alg:BA} is a Byzantine Agreement protocol in partial synchrony, we prove several lemmas. 
The lemmas can be classified into two types: safety lemmas and liveness lemmas.
The safety lemmas show that if a nonfaulty party decides on some value, no nonfaulty party decides on a different value.
This is achieved by the locking mechanism.
Roughly speaking, if some nonfaulty party decides on some value, there exist $f+1$ nonfaulty parties that are locked on that value and will stop any other value from progressing past the $propose$ message.
The liveness lemmas show two crucial properties for liveness.
First of all, if some nonfaulty party sets $key3$ to be some value, then there are $f+1$ parties that will support that key.
This means that if a nonfaulty party hears key suggestions from all nonfaulty parties, it accepts them and picks some key.
Secondly, if some nonfaulty primary picks a key to propose, the $suggest$ messages it receives guarantee that any nonfaulty party will receive enough supporting $proof$ messages.
This means that all nonfaulty parties eventually accept the primary's proposal, even if they are locked on some other value.
In the following lemmas assume that the number of faulty parties is $f<\frac{n}{3}$.

\subsection{Safety Lemmas}

The following lemma and corollary show that a primary cannot equivocate in a given view.
More precisely, in a given view all nonfaulty parties send messages that report the same value, other than $echo$ messages which might have more than one value.

\begin{restatable}{lemma}{nonEquivocation}\label{lem:nonEquivocation}
    If two nonfaulty parties send the messages $\left<key1,val,v\right>$ and $\left<key1,val',v\right>$, then $val=val'$.
\end{restatable}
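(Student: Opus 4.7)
The plan is to prove this by a standard quorum-intersection argument, using the fact that a nonfaulty party sends at most one value in an $echo$ message per view.

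First, I would establish the following ``echo uniqueness'' observation. Inspecting \texttt{process\_messages(view)}, the $echo$ message is sent only inside the block triggered by \emph{``upon receiving the first $\langle propose,key,val,view\rangle$ message from $primary$''}. Because only the first such $propose$ triggers this block, a nonfaulty party $p$ sends at most one $value$ inside an $echo$ message tagged with view $v$ (it may send this same $echo$ to multiple recipients via \texttt{send\_all\_upon\_join}, but always with the same $val$). This uses nothing beyond the textual structure of \texttt{process\_messages} and the ``upon first'' semantics.

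Next, I would use the condition that triggers sending a $key1$ message in view $v$: a nonfaulty party only sends $\langle key1,val,v\rangle$ after receiving $\langle echo,val,v\rangle$ from $n-f$ distinct parties. Let $p$ and $p'$ be the two nonfaulty parties in the statement, sending $\langle key1,val,v\rangle$ and $\langle key1,val',v\rangle$ respectively, and let $S, S' \subseteq [n]$ be the corresponding sets of echo-senders, with $|S|,|S'|\ge n-f$. By inclusion-exclusion, $|S\cap S'|\ge 2(n-f)-n = n-2f$, and since $n>3f$, we have $|S\cap S'|\ge f+1$. Hence $S\cap S'$ contains at least one nonfaulty party $q$.

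Finally, I would combine the two steps: $q$ must have sent both $\langle echo,val,v\rangle$ (to $p$) and $\langle echo,val',v\rangle$ (to $p'$), but the echo-uniqueness observation forces the value in any $echo$ message $q$ sends in view $v$ to be unique, hence $val=val'$. There is no real obstacle here; the only thing to be slightly careful about is making sure the ``at most one echo value per view'' claim is justified directly from the pseudocode (in particular, that no other line in the protocol can emit an $\langle echo,\cdot,v\rangle$ message), which I would note explicitly.
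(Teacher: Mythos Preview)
Your proposal is correct and follows essentially the same quorum-intersection argument as the paper's own proof: both identify a nonfaulty party in the overlap of the two $n-f$ sets of $echo$-senders and use that this party sends at most one $echo$ value per view. If anything, you are slightly more careful than the paper in explicitly justifying the ``one echo value per view'' claim from the \texttt{upon first propose} structure of \texttt{process\_messages}; the paper simply asserts that a nonfaulty party ``sends the same message to all parties.''
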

\begin{proof}
    Observe two nonfaulty parties $i$ and $j$ that sent the messages $\left<key1,val,v\right>$ and $\left<key1,val',v\right>$ respectively.
    Before doing so, $i$ received the message $\left<echo,val,view\right>$ from $n-f$ parties, $f+1$ of which are nonfaulty.
    Let the set of those nonfaulty parties be $I$.
    Similarly, $j$ received the message $\left<echo,val',view\right>$ from $n-f$ parties.
    Since there are only $n$ parties, $j$ must have received the message from at least one party $k\in I$.
    Party $k$ is nonfaulty, so it sends the same $key1$ message to all parties.
    Therefore, $val=val'$.
\end{proof}

\begin{restatable}{corollary}{singleValue}\label{col:singleValue}
    If two nonfaulty parties $i$ and $j$ send a $\left<tag,val,v\right>$ and $\left<tag',val',v\right>$ message such that $tag,tag'\in\left\{key1,key2,key3,lock\right\}$ then $val=val'$.
\end{restatable}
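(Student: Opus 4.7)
The plan is to reduce every tag in $\{key1,key2,key3,lock\}$ back to the $key1$ case and then invoke Lemma~\ref{lem:nonEquivocation}. The key observation is that each of the higher-level messages is only sent after receiving $n-f$ messages of the previous stage with a matching value, and any set of $n-f$ senders contains at least $f+1$ nonfaulty parties.

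Concretely, I will first establish by a short induction the following auxiliary claim: if a nonfaulty party sends $\langle tag,val,v\rangle$ with $tag \in \{key1,key2,key3,lock\}$ in view $v$, then some nonfaulty party sent $\langle key1,val,v\rangle$. The base case $tag=key1$ is trivial. For the inductive step, observe from the code of \textsc{process\_messages} that a nonfaulty party sends $\langle key2,val,v\rangle$ only after receiving $n-f$ messages of the form $\langle key1,val,v\rangle$; since $n-f \geq f+1$, at least one of the senders is nonfaulty, and by authenticated channels that nonfaulty party indeed sent $\langle key1,val,v\rangle$. The same argument moves from $key2$ to $key3$ and from $key3$ to $lock$, chaining the implication down through the levels.

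With the auxiliary claim in hand, the corollary follows immediately: apply the claim to the two nonfaulty parties that sent $\langle tag,val,v\rangle$ and $\langle tag',val',v\rangle$ to obtain two nonfaulty parties that sent $\langle key1,val,v\rangle$ and $\langle key1,val',v\rangle$ respectively, and then invoke Lemma~\ref{lem:nonEquivocation} to conclude $val=val'$.

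I do not expect any real obstacle here; the only thing to be careful about is that the chain really does cascade downward at every step (each stage's $n-f$ threshold produces a nonfaulty sender of the previous stage) and that all of this occurs within the \emph{same} view $v$, which is immediate from the fact that the $view$ field is copied through unchanged when each higher-tag message is triggered.
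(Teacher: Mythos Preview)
Your proposal is correct and follows essentially the same approach as the paper: both trace each higher tag back to a nonfaulty $key1$ sender via the $n-f$ threshold at each stage and then invoke Lemma~\ref{lem:nonEquivocation}. Your presentation is slightly cleaner in that you isolate the reduction as an explicit auxiliary claim before applying the lemma once, whereas the paper argues level by level, but the content is the same.
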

\begin{proof}
    If some nonfaulty party sent a $\left<key1,val,v\right>$ message, from Lemma~\ref{lem:nonEquivocation} all nonfaulty parties that send a $\left<key1,val',v\right>$ message do so with $val=val'$.
    Now observe some nonfaulty party that sends a $\left<key2,val',v\right>$ message.
    It first received a $\left<key1,val',v\right>$ from $n-f$ parties, $f+1$ of which are nonfaulty.
    Using the previous observation, $val'=val$.
    Note that this also means that if two nonfaulty parties send $\left<key2,val,v\right>$ and $\left<key2,val',v\right>$ messages, then $val=val'$.
    Using similar logic, the same can be said for $key3$ and $lock$ messages.
\end{proof}

The following lemma and corollary now show that all $done$ messages that nonfaulty parties send have the same value.
There are two ways nonfaulty party might send a $done$ message: in the end of a view, or after receiving enough $done$ messages from other parties.
In the first view a nonfaulty party sends a $done$ message in line~\ref{line:doneMessage}, no nonfaulty party sends a $done$ message with another value because of the previous non-equivocation claims.
Then, once such a $done$ message is sent, there are $f+1$ nonfaulty parties that are locked on that value, and won't allow any other value to be proposed by a primary.
Since all nonfaulty parties send $done$ messages with the same value at the end of views, they never receive enough $done$ messages with another value for them to echo that $done$ message.

\begin{restatable}{lemma}{uniqueDone}\label{lem:uniqueDone}
    If two nonfaulty parties send the messages $\left<done,val\right>$ and $\left<done,val'\right>$ in line~\ref{line:doneMessage}, then $val=val'$.
\end{restatable}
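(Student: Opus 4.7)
The plan is to split into two cases based on the views $v$ and $v'$ in which the two done messages are sent at line~\ref{line:doneMessage}, where $v \le v'$ is assumed without loss of generality. In each case, the sending party first received $\langle lock, \cdot, \cdot\rangle$ from $n-f$ parties in the corresponding view.

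For $v = v'$, the argument is a quorum intersection: the two $(n-f)$-sized sender sets share at least $n - 2f > f$ members and hence contain a common nonfaulty party $k$; since a nonfaulty $k$ issues at most one lock message per view, $val = val'$. Alternatively, Corollary~\ref{col:singleValue} applies directly to any nonfaulty party in the intersection.

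For $v < v'$, I would induct on $u \in \{v, v+1, \ldots, v'\}$ with the invariant that every $key1$, $key3$, and $lock$ message sent by a nonfaulty party with view tag $u$ carries value $val$. The base case $u = v$ follows from Corollary~\ref{col:singleValue}, since $i$'s $n-f$ lock senders include a nonfaulty party that sent $\langle lock, val, v\rangle$. For the inductive step, let $L_v$ be the set of at least $f+1$ nonfaulty senders of $\langle lock, val, v\rangle$ received by $i$. The induction hypothesis implies that every lock-update performed by any $r \in L_v$ in a view in $[v, u]$ preserves $lock\_val = val$, so at the start of view $u+1$ each such $r$ has $lock = w$ for some $w \ge v$ and $lock\_val = val$. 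If $r$ were to echo $val'' \ne val$ in view $u+1$, it would need $open\_lock$ to return true, which requires $f+1$ supporting proof messages. But for any nonfaulty proof $(k1, v1, pk1)$: if $k1 < w$ then $pk1 \le k1 < w$ kills both support conditions; if $k1 \ge w \ge v$, then the induction hypothesis forces $v1 = val$, and $pk1 \ge w$ would contradict the induction hypothesis applied at view $pk1$, so $pk1 < w$ and again both conditions fail. Hence at most $f$ (Byzantine) proofs can support opening, so $open\_lock$ returns false and no $r \in L_v$ echoes $val''$. This leaves at most $(n-f) - (f+1) + f = n - f - 1$ parties able to echo $val''$ in view $u+1$, which is insufficient to trigger any key1 for $val''$, extending the invariant to $u+1$. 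Applying the invariant at $u = v'$ to the $n-f$ senders of $\langle lock, val', v'\rangle$ that $j$ received yields $val' = val$.

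The main obstacle is the coordinated bookkeeping across the three fields $lock$, $key1$, and $prev\_key1$ in the inductive step, and in particular verifying that no nonfaulty proof can satisfy either support condition of $open\_lock$ for any current lock view $w \ge v$ consistent with the invariant. The crucial numerical inequality is $3f < n$, which powers both the quorum intersection in the base case and the final count $n - f - 1 < n - f$ that prevents any off-value key1 from forming in each intermediate view.
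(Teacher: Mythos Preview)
Your proposal is correct and follows essentially the same approach as the paper: induction over views starting from the earliest \emph{done} view, maintaining a set of at least $f{+}1$ nonfaulty parties that stay locked on $val$, and showing that no nonfaulty $proof$ tuple can satisfy either branch of $open\_lock$ for such a party, so that off-value echoes can never reach the $n{-}f$ threshold needed to form a $key1$. The paper states its invariant directly in terms of the fields $key1$, $key1\_val$, $prev\_key1$ (``no nonfaulty party has $prev\_key1\ge v^*$'' and ``$key1\ge v^*\Rightarrow key1\_val=val$'') rather than in terms of messages sent, but this is equivalent to what you track, and the paper's invariant subsumes your case $v=v'$ so the separate quorum-intersection case is not needed.

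One small point you should make explicit: when you invoke the induction hypothesis on a nonfaulty proof $(k1,v1,pk1)$ in $r$'s $proofs$ set to conclude $v1=val$, you need $k1\le u$. This holds because the receiver only adds a tuple when $view>k1$ (so $k1<u{+}1$), but it is not automatic from the sender's side: a nonfaulty sender may already have processed $n{-}f$ echoes and set $key1=u{+}1$ in view $u{+}1$ before receiving $r$'s $request$ and shipping its $proof$. Without citing the receiver-side filter in $process\_messages$, the step ``the induction hypothesis forces $v1=val$'' (and likewise the contradiction from $pk1\ge w$) is unjustified.
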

\begin{proof}
    Let $v^*$ be the first view in which some nonfaulty party sends a $done$ message in line~\ref{line:doneMessage}.
    We will now prove by induction that: 
    \begin{itemize}
        \item for every $v\geq v^*$, no nonfaulty party sends a $\left<key1,val',v\right>$  message with $val'\neq val$,
        \item from that point on there exist $f+1$ nonfaulty parties with $lock\geq v^*,lock\_val=val$,
        \item no nonfaulty party has $prev\_key1\geq v^*$,
        \item and for every nonfaulty party that has $key1\geq v^*$, $key1\_val=val$.
    \end{itemize}
    First, observe the view $v^*$.
    Some nonfaulty party sent the message $\left<done,val\right>$ in line~\ref{line:doneMessage} during view $v$, and thus it received a $\left<lock,val,v^*\right>$ message from $n-f$ parties.
    Since $n-2f\geq f+1$, $f+1$ of those parties are nonfaulty.
    After sending such a message, every one of those nonfaulty parties set $lock= v^*,lock\_val=val$, which proves the second condition holds.
    From Corollary~\ref{col:singleValue}, every nonfaulty party that sends a $\left<key1,val',v^*\right>$ message does so with $val=val'$.
    Clearly before view $v^*$, all nonfaulty parties have $prev\_key1<key1<v^*$.
    This means that for every nonfaulty party that doesn't update its $key1$ or $prev\_key1$ fields in view $v^*$, the claim holds in the end of view $v^*$.
    Now observe some nonfaulty party that updates its $key1$ field in view $v^*$.
    If in the beginning of the view $key1\_val=val$, the party only updates $key1=v^*$, and thus the claim holds.
    Otherwise, $key\_val\neq val$, and the party first updates $prev\_key1=key1<v^*$ and then updates $key1=v^*,key\_val=val$.
    The claim holds in this case as well.
    
    Now, assume the claim holds for every $v'$ such that $v>v'\geq v^*$. 
    Let $I$ be a set of $f+1$ nonfaulty parties for whom $lock\geq v^*,lock\_val=val$.
    Observe some nonfaulty party $i$ that sends a $\left<key1,val',v\right>$ message.
    It must have first received an $\left<echo,val',v\right>$ message from $n-f$ parties.
    Since $\left|I\right|= f+1$, one of those parties is in the set $I$.
    Let that party be $j$.
    Party $j$ only sends such a message if either $val=val'$ or if it receives $proof$ messages that support opening its lock in view $v$ from $f+1$ parties, one of which is nonfaulty.
    However, for every nonfaulty party, $prev\_key1<v^*\leq lock$, so the first condition of Algorithm~\ref{alg:checkKey1} doesn't hold.
    If $key1\geq lock\geq v^*$ then $key1\_val=val=lock\_val$, which means that the second condition of Algorithm~\ref{alg:checkKey1} doesn't hold.
    Therefore, it must be the case that $val'=val$.
    In other words, if some nonfaulty party sends a $\left<key,val',v\right>$, $val'=val$, and thus the first part of the induction claim holds.
    
    If some nonfaulty party in $I$ updates its $lock$ value in view $v$, then it first received $\left<lock,val',v\right>$ messages from $n-f$ parties, $f+1$ of which are nonfaulty.
    From Corollary~\ref{col:singleValue}, if some nonfaulty party sends a $lock$ message in view $v$ it does so with $val'=val$.
    Therefore, if some nonfaulty party in $I$ updates its $lock$ value in view $v$, it sets $lock= v> v^*$ and $lock\_value= val$.
    In other words, $I$ remains a set of $f+1$ nonfaulty parties with $lock\geq v^*,lock\_val=val$.
    Now observe some nonfaulty party $i$.
    If $i$ doesn't update its $key1$ field, then it doesn't update its $prev\_key1$ field either.
    In that case, the final two conditions of the induction hold.
    If $i$ updates its $key1$ field, it must be the case that $i$ received $\left<key1,val,v\right>$ messages from $n-f$ parties.
    If $i$ already had $key1\_val=val$, then it only updates $key1=v>v^*$, and $prev\_key1<v^*$ remains true.
    If $i$ had $key1\_val\neq val$, then from the induction hypothesis it must be the case that $key1<v^*$.
    In that case, $i$ updates $prev\_key1=key1<v^*$ and then $key1=v>v^*,key\_val=val$.
    In both of those scenarios the claim continues to hold, completing the induction.
    
    Finally, if some nonfaulty party sends a $\left<done,val'\right>$ message in view $v\geq v^*$ in line~\ref{line:doneMessage}, then it first received a $\left<lock,val',v\right>$ message from $n-f$ parties, $f+1$ of which are nonfaulty.
    As shown above, in round $v$ at least $n-f$ parties sent some $\left<key1,val'',v\right>$ message, $f+1$ of which are nonfaulty as well.
    Combining the previous observations and Corollary~\ref{col:singleValue}, $val'=val''=val$, completing our proof.
\end{proof}

\begin{restatable}{corollary}{doneCorollary}\label{col:uniqueDone}
    If two nonfaulty parties send the messages $\left<done,val\right>$ and $\left<done,val'\right>$, then $val=val'$.
\end{restatable}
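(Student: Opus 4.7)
The plan is to reduce this corollary to Lemma~\ref{lem:uniqueDone}, which already handles $done$ messages sent via line~\ref{line:doneMessage}. The only additional case is $done$ messages sent via the echo mechanism in line~\ref{line:doneEcho}, i.e., triggered by receiving $f+1$ $done$ messages with the same value. The check ``no $done$ message has been previously sent'' in \textbf{check\_progress} (and the analogous guard around line~\ref{line:doneMessage}) means that each nonfaulty party sends at most one $done$ message over the entire execution; this is what will let us chase echo chains back to an original sender.

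The key claim I will establish is: if some nonfaulty party ever sends a $\left<done,v\right>$ message (by either rule), then some nonfaulty party sends $\left<done,v\right>$ specifically via line~\ref{line:doneMessage}. To prove this, I fix a value $v$ and, among all nonfaulty parties that send a $done$ message with value $v$, take the one, call it $i_0$, whose send event occurs earliest in real time. If $i_0$ had sent its $done$ message via line~\ref{line:doneEcho}, then $i_0$ must have first received $\left<done,v\right>$ from $f+1$ distinct parties; since $f+1 > f$, at least one of those senders is nonfaulty, and that party sent $\left<done,v\right>$ strictly before $i_0$ did. This contradicts the minimality of $i_0$, so $i_0$ must have taken the line~\ref{line:doneMessage} branch.

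Given the claim, the corollary follows quickly. Suppose nonfaulty parties send $\left<done,val\right>$ and $\left<done,val'\right>$. Applying the claim to each value, there exist nonfaulty parties that sent $\left<done,val\right>$ and $\left<done,val'\right>$ via line~\ref{line:doneMessage}. Lemma~\ref{lem:uniqueDone} then yields $val=val'$.

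The only subtle step is the minimality argument; in particular, it is important that ``sending a $done$ message'' refers to the moment of invocation of the send, and that the $f+1$ $done$ messages that trigger line~\ref{line:doneEcho} must have been received before $i_0$'s send, hence were sent strictly earlier. Because each nonfaulty party sends at most one $done$ message, there is no circularity, and the minimal-element argument goes through cleanly without requiring any additional induction on chain length.
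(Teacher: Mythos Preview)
Your proof is correct and follows essentially the same approach as the paper's: both take, for each value, the earliest nonfaulty sender of a $\langle done,\cdot\rangle$ message with that value, argue that this send cannot be via line~\ref{line:doneEcho} (since one of the $f+1$ triggering messages would come from an earlier nonfaulty sender), and then invoke Lemma~\ref{lem:uniqueDone}. The only cosmetic difference is that the paper phrases the whole thing as a proof by contradiction on $val\neq val'$, whereas you first isolate the reduction-to-line~\ref{line:doneMessage} claim and then apply the lemma directly.
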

\begin{proof}
    Assume by way of contradiction that there exist two values $val\neq val'$ such that two nonfaulty parties sent the messages $\left<done,val\right>$ and $\left<done,val'\right>$.
    Let $i$ be the first nonfaulty party that sent a $\left<done,val\right>$ message and $j$ be the first nonfaulty party that sent a $\left<done,val'\right>$ message.
    If $i$ sent the message in line~\ref{line:doneEcho} it must have first received the message $\left<done,val\right>$ from $f+1$ parties, one of which is nonfaulty.
    By assumption, $i$ is the first nonfaulty party that sends such a message, reaching a contradiction.
    Similarly $j$ couldn't have sent the message in line~\ref{line:doneEcho}.
    This means that both of them sent their respective messages in line~\ref{line:doneMessage}, and thus from Lemma~\ref{lem:uniqueDone}, $val=val'$.
\end{proof}

\subsection{Liveness Lemmas}

The first two lemmas show that no nonfaulty party gets ``stuck'' in a view.
If some nonfaulty party terminates, then every nonfaulty party eventually terminates as well.
In addition, after GST, all nonfaulty parties start participating in consecutive views until terminating.

\begin{restatable}{lemma}{terminationPropagation}\label{lem:terminationPropagation}
    Observe some  nonfaulty party $i$ that terminates.
    All nonfaulty parties terminate no later than $2\Delta$ time after both GST occurs, and $i$ terminates.
\end{restatable}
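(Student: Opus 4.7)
The plan is to run a standard Bracha-style propagation argument on the $done$ messages, using Corollary~\ref{col:uniqueDone} to ensure that every nonfaulty party that ever sends a $done$ message does so with the same value as $i$. Let $t_i$ be the time $i$ terminates, let $val$ be the value $i$ decides, and let $T = \max(\text{GST}, t_i)$. We want to conclude that every nonfaulty party has terminated by time $T + 2\Delta$.

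First I would observe that because $i$ terminated, it received $\left<done,val\right>$ messages from $n - f$ parties, of which at least $n - 2f \geq f + 1$ are nonfaulty. Each of these $f+1$ nonfaulty parties sent $\left<done,val\right>$ to every party no later than time $t_i$. Since $t_i \leq T$ and $T \geq \text{GST}$, partial synchrony guarantees that every nonfaulty party has received these $f + 1$ $\left<done,val\right>$ messages by time $T + \Delta$.

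Next, I would argue that by time $T + \Delta$, every nonfaulty party has itself sent a $\left<done,val\right>$ message. Indeed, by time $T + \Delta$ every nonfaulty party has received $f+1$ $\left<done,val\right>$ messages; if it has not already sent a $done$ message, the rule on line~\ref{line:doneEcho} fires and it sends $\left<done,val\right>$ at that time. If it has already sent a $done$ message, then by Corollary~\ref{col:uniqueDone} that message must be $\left<done,val\right>$ as well. Either way, each of the (at least $n-f$) nonfaulty parties has sent $\left<done,val\right>$ by time $T + \Delta$.

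Finally, applying the $\Delta$ delay bound once more, by time $T + 2\Delta$ every nonfaulty party has received $\left<done,val\right>$ messages from all $n - f$ nonfaulty parties and therefore terminates. The only delicate step is step two, where one must rule out the possibility that a nonfaulty party has already committed to sending a $done$ message with a different value before receiving the $f+1$ echoes; this is exactly what Corollary~\ref{col:uniqueDone} gives us, so no additional work is needed beyond the timing bookkeeping.
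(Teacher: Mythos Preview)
Your proposal is correct and follows essentially the same approach as the paper's own proof: both extract $f+1$ nonfaulty senders from the $n-f$ $done$ messages that $i$ received, use the $\Delta$ bound after GST for the first hop, invoke Corollary~\ref{col:uniqueDone} to handle parties that already sent a $done$ message, apply the $f{+}1$ echo rule on line~\ref{line:doneEcho} for the rest, and then take one more $\Delta$ hop to reach the $n-f$ termination threshold. Your explicit definition of $T=\max(\mathrm{GST},t_i)$ makes the timing bookkeeping slightly cleaner than the paper's prose, but there is no substantive difference.
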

\begin{proof}
    If $i$ terminates, then it received $\left<done,val\right>$ messages with the same value $val$ from $n-f$ parties, $f+1$ of which are nonfaulty.
    Every one of those nonfaulty parties sends the message to all nonfaulty parties.
    After GST, all parties receive those messages in $\Delta$ time or less.
    If when a nonfaulty party receives those messages it has already sent a $\left<done,val'\right>$ message, from Corollary~\ref{col:uniqueDone} it did so with $val'=val$.
    Otherwise, once it receives those messages it sends a $\left<done,val\right>$ message in line~\ref{line:doneEcho}.
    In other words, all nonfaulty parties eventually send a $\left<done,val\right>$ message.
    All nonfaulty parties then receive those messages in $\Delta$ more time.
    After receiving those messages, every nonfaulty party that hasn't terminated yet does so.
    In total, all of those messages arrive in no more than $2\Delta$ time, and then all nonfaulty parties terminate.
\end{proof}

\begin{restatable}{lemma}{abortPropagation}\label{lem:abortPropagation}
    Let $v$ be the highest view that some nonfaulty party is in at GST.
    For every view $v'>v$, all nonfaulty parties either start view $v'$, or terminate in some earlier view.
    
    Furthermore, if some nonfaulty party starts view $v'$ after GST, all nonfaulty parties either terminate or start view $v'$ no later than $2\Delta$ time afterwards.
\end{restatable}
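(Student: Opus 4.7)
The plan is to prove both statements via the abort echo/threshold mechanism of $check\_progress$, reading ``start view $v'$'' as setting $view$ to some value $\geq v'$ (since the jump rule $view\gets w+1$ can skip past $v'$ when a larger abort value is assembled). The two main tools are the $11\Delta$ timer on line~\ref{line:abortMessage} which forces every nonfaulty party still in a view to broadcast $\langle abort,view\rangle$, and the Bracha-style $(f+1)$-echo rule on line~\ref{line:abortEcho} paired with the $(n-f)$-jump rule.

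For the ``furthermore'' part, suppose a nonfaulty party $p$ first sets $view\gets v'$ at some time $t>\mathrm{GST}$. By the jump rule, at time $t$ the $(n-f)$-th largest entry of $p$'s $highest\_abort$ is at least $v'-1$, so at least $f+1$ of those entries belong to nonfaulty parties. Each such nonfaulty $j$ sent an $\langle abort,u_j\rangle$ with $u_j\geq v'-1$ (either from line~\ref{line:abortMessage} or as an echo on line~\ref{line:abortEcho}) to all parties at some time $\leq t$. Since $t>\mathrm{GST}$, every nonfaulty receiver has those $f+1$ messages by time $t+\Delta$; at that point each such receiver either already has $highest\_abort[i]\geq v'-1$, or else its $(f+1)$-th largest value now strictly exceeds $highest\_abort[i]$ at some value $\geq v'-1$, triggering an echo with value $\geq v'-1$. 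These echoes go to all parties and arrive by time $t+2\Delta$, at which point every nonfaulty receiver records at least $n-f$ $highest\_abort$ entries with value $\geq v'-1$ (the original $f+1$ plus the echoes from the remaining nonfaulty parties), so its $(n-f)$-th largest value $w$ satisfies $w\geq v'-1$; the jump rule then sets $view\gets w+1\geq v'$ unless the party is already at a view $\geq v'$ or has terminated.

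For the first statement, I plan to show that, absent termination, the minimum view across nonfaulty parties strictly increases. Concretely, if at some time past GST no nonfaulty has terminated and the current minimum view is $u^*$, then within $11\Delta$ each nonfaulty $j$ has sent $\langle abort,v_j\rangle$ on line~\ref{line:abortMessage} for its current view $v_j\geq u^*$, and within a further $\Delta$ every other nonfaulty party has recorded $highest\_abort[j]\geq v_j\geq u^*$. Every nonfaulty party then has at least $n-f$ entries of $highest\_abort$ at value $\geq u^*$, so the $(n-f)$-th largest value $w$ is $\geq u^*$, and every party still at $view=u^*$ jumps to $w+1>u^*$. Iterating the argument (and invoking Lemma~\ref{lem:terminationPropagation} whenever some party does terminate) yields either that all nonfaulty terminate or that $\min_j v_j$ grows without bound, so every $v'>v$ is eventually reached or surpassed by every nonfaulty party.

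The main obstacle I expect is the bookkeeping around a party's own $highest\_abort[i]$ entry, which is updated only through the echo on line~\ref{line:abortEcho} and not through the direct $abort$ send on line~\ref{line:abortMessage}. To make both arguments rigorous one must observe that $i$'s own broadcast is received by every other nonfaulty party, so every nonfaulty array entry for $i$ reflects $i$'s contribution even if $i$'s own self-entry lags; the $(f+1)$-echo amplification then lets $i$'s own entry catch up within the same $\Delta$-windows used above, and a small case analysis at the minimum-view party handles the induction step cleanly.
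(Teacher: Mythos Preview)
Your proposal has a genuine gap: you explicitly reinterpret ``start view $v'$'' as ``set $view$ to some value $\geq v'$,'' allowing the jump rule to skip past $v'$. But the lemma as stated---and as used in Lemma~\ref{lem:firstGoodPrimary}---requires that every nonfaulty party actually enters view $v'$ (i.e., has $view=v'$ at some point). If views could be skipped after GST, some nonfaulty parties might skip right over the view of the first nonfaulty primary, and the liveness argument would collapse. Your ``furthermore'' argument and your induction on $\min_j v_j$ establish only the weaker statement that all nonfaulty parties eventually reach a view $\geq v'$; you then say ``reached or surpassed,'' which is exactly the wrong conclusion.

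The missing idea is a timing/no-skipping argument that the paper supplies. Suppose $t>\mathrm{GST}$ is the first time some nonfaulty party has $view\geq v'$. At $t$ no nonfaulty party has yet sent $\langle abort,v''\rangle$ for any $v''\geq v'$: from line~\ref{line:abortMessage} this would require having been in view $\geq v'$ for $11\Delta$, which has not happened; from line~\ref{line:abortEcho} an echo at level $u$ requires $f+1$ entries, hence at least one nonfaulty sender, at level $\geq u$. Therefore during the entire $2\Delta$ propagation window you analyze, every nonfaulty party's $(n-f)$-th largest $highest\_abort$ value is at most $v'-1$, so the jump $view\gets w+1$ lands exactly at $v'$, not beyond it. That is the crucial step (the paper flags it with ``Crucially, no nonfaulty node sends an $\langle abort,v'\rangle$ message in line~\ref{line:abortMessage} before $11\Delta$ time has passed\ldots''), and it is absent from your plan.
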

\begin{proof}
    If some nonfaulty party terminated before GST, from Lemma~\ref{lem:terminationPropagation} all nonfaulty parties terminate $2\Delta$ time after GST.
    At GST, no nonfaulty party is in any view greater than $v$.
    This means that no nonfaulty party sends an $\left<abort,v'\right>$ message for any $v'>v$ in line~\ref{line:abortMessage} before GST.
    Therefore, at that time, for every nonfaulty party, the $f+1$'th largest value in $highest\_abort$ is no greater than $v$, and thus no nonfaulty party sends an $\left<abort,v'\right>$ message for any $v'>v$ in line~\ref{line:abortEcho} either.
    Since some nonfaulty party started view $v$, it found that the $n-f$'th largest value in $highest\_abort$ is $v-1$.
    In order for that to happen, it must have received $\left<abort,v_j\right>$ messages from $n-f$ parties $j$ such that $v_j\geq v-1$, and updated $highest\_abort$ accordingly.
    Out of those $n-f$ parties, at least $f+1$ are nonfaulty so they sent the same message to all parties and the nonfaulty parties receive that message in $\Delta$ time.
    Every nonfaulty party that receives those $f+1$ messages updates $highest\_abort$ accordingly and sees that the $f+1$'th largest value in $highest\_abort$ is at least $v-1$ and thus sends an $\left<abort,u\right>$ message for some $u\geq v-1$ to all parties.
    Every nonfaulty party receives those messages in $\Delta$ time, updates $highest\_abort$ accordingly and finds that the $n-f$'th largest value in $highest\_abort$ is at least $v-1$.
    Afterwards, they all compute the $n-f$'th largest value in $highest\_abort$, $w\geq v-1$, and start view $w+1$.
    Note that as discussed above, none of the messages those nonfaulty parties send is sent in a view greater than $v$, so all nonfaulty parties either start view $v$ or view $v+1$ in $2\Delta$ time.
    Using similar arguments, if some nonfaulty party starts view $v'$ after GST, all nonfaulty parties start view $v'$ as well in no more than $2\Delta$ time.
    Finally, after \viewLength time in each view, every nonfaulty party that hasn't terminated sends an $abort$ message, all nonfaulty parties receive that message in $\Delta$ time and start the next view until some nonfaulty party terminates.
    Crucially, no nonfaulty node sends an $\left<abort,v'\right>$ message in line~\ref{line:abortMessage} before \viewLength time has passed, so using similar arguments every nonfaulty hears all of the relevant $abort$ messages and starts view $v'$ before starting any later views.
\end{proof}

Eventually, all nonfaulty parties participate in some view with a nonfaulty primary, if they haven't terminated previously.
The next lemmas show that once that happens, all nonfaulty parties terminate.
First of all, in order for that to happen, a primary needs to receive enough suggestions for a $key3$ that it will accept.
The following lemma shows that every nonfaulty party's $key3$ field has enough support from nonfaulty parties for the primary to accept the key.
Intuitively, since a nonfaulty party set its $key3$ field to some value, there exist $f+1$ nonfaulty parties that sent a $key2$ message with that value.
The lemma shows that those $f+1$ nonfaulty parties have $key2$, $key2\_val$ and $prev\_key2$ fields that continue to support the key.

\begin{lemma}\label{lem:goodKeySupport}
    If some nonfaulty party sets $key3=v$, $key3\_val=val$ in view $v$, then there exist $f+1$ nonfaulty parties whose $suggest$ messages in every view $v'>v$ support $key3$ and $key3\_val$.
\end{lemma}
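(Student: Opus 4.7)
The plan is to locate a set $I$ of $f{+}1$ nonfaulty parties that jointly ``witness'' the key, and then prove by induction over field updates that each member of $I$ retains enough information in its $(key2,key2\_val,prev\_key2)$ triple to satisfy one of the two disjuncts checked in Algorithm~\ref{alg:checkKey2}.

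First I would identify $I$. Since a nonfaulty party set $key3=v, key3\_val=val$ in view $v$ only after receiving $n{-}f$ messages of the form $\left<key2,val,v\right>$, at least $f{+}1$ of their senders are nonfaulty, and by Corollary~\ref{col:singleValue} every nonfaulty $key2$ message in view $v$ carries the same value $val$. Let $I$ be this set of nonfaulty senders. Immediately after view $v$, each $i\in I$ has $key2\ge v$ with $key2\_val=val$.

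Next I would set up the invariant: at every moment after view $v$, each $i\in I$ satisfies at least one of
(a) $key2\ge v$ and $key2\_val=val$, or
(b) $prev\_key2\ge v$.
Observe from the update rule for $key2$ that $prev\_key2<key2$ is always maintained, since whenever $prev\_key2$ is overwritten it is set to the previous $key2$, which is then immediately replaced by a strictly larger $view$. The induction step considers a party in $I$ that sends $\left<key2,val'',v''\right>$ at some view $v''>v$, and splits on which of (a) or (b) held and whether $val''=key2\_val$. The only nontrivial cases are: (a) with $val''\ne val$, where the new $prev\_key2$ becomes the old $key2\ge v$, giving (b); and (b) with $val''\ne key2\_val$, where the new $prev\_key2$ is the old $key2$, which by the $prev\_key2<key2$ invariant is strictly greater than the old $prev\_key2\ge v$, so (b) is preserved. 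The two remaining cases ($val''$ matches) leave the relevant disjunct untouched.

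Finally, I would read off the conclusion: for any $v'>v$, the $suggest$ message sent by $i\in I$ in view $v'$ reports the current triple $(key2,key2\_val,prev\_key2)$, and by the invariant it satisfies either $v\le prev\_key2$ or ($v\le key2$ and $val=key2\_val$), which is exactly the support condition of Algorithm~\ref{alg:checkKey2} with $key=v$ and $value=val$. The main obstacle is the bookkeeping in case (b) when the value changes again: one must use the strict inequality $prev\_key2<key2$ to avoid $prev\_key2$ ``falling below'' $v$ after an overwrite. Everything else is routine case analysis on the update rule.
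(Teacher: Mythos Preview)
Your proposal is correct and follows essentially the same approach as the paper: identify the $f{+}1$ nonfaulty senders of $\langle key2,val,v\rangle$, then maintain by induction the disjunctive invariant that either $key2\ge v$ with $key2\_val=val$ or $prev\_key2\ge v$, using the standing inequality $prev\_key2<key2$ to handle the overwrite case. The paper organizes the inductive step by whether only $key2$ is updated versus both fields, while you split on which disjunct held and whether the new value matches; these are the same case analysis in different packaging.
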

\begin{proof}
    We will prove by induction that there exist $f+1$ nonfaulty parties for whom in every $v'> v$ either $prev\_key2\geq key3$, or $key2\geq key$ and $key2\_value=val$.
    Since those are the fields that nonfaulty parties send in $suggest$ messages, that proves the lemma.
    First, observe view $v$.
    In that view, some nonfaulty party set $key3=v$ and $key3\_val=val$.
    This means that it received a $\left<key2,val,v\right>$ message from $n-f$ parties, $f+1$ of whom are nonfaulty.
    In addition to other possible updates, every one of those parties updates $key2=view$, and $key2\_val=val$ if that isn't true already.
    Those $f+1$ parties prove the claim for view $v$.
    
    Now assume the claim holds for every $v''<v'$.
    Observe party $j$, which is one of the $f+1$ parties described in the induction claim.
    If $j$ doesn't update any of its $key2$ fields in view $v'$, those conditions continue to hold in the end of view $v'$ and in the beginning of the next view.
    If $j$ only updates $key2$ to be $v'$, then if $prev\_key2\geq key3$, it remains that way, and if $key2\geq key3$ as well as $key2\_val=key3\_val$, after updating $key2$ to be $v'>key2\geq key3$, it also remains that way.
    Otherwise $j$ updates $prev\_key2=key2$ too.
    Note that $key2>prev\_key2$ at all times. Therefore, before updating $prev\_key2$, regardless of which part of the induction claim holds, $key2\geq key3$.
    After updating $prev\_key2$ to be $key2$, $prev\_key2\geq key3$, completing the proof.
\end{proof}

The following lemma is used to show that if a nonfaulty primary chose some key, and some nonfaulty party has a lock, it is either the case that the key's value equals the lock's value, or there are enough nonfaulty parties that support opening the lock.
Note that the conditions of the lemma are nearly identical to the conditions the primary checks before accepting a proof as supporting some key.
This means that before accepting a key, the primary essentially checks if there is enough support to open any other lock.
Similarly to the previous lemma, this lemma shows that if some nonfaulty party sets $key2$ to some value, there are $f+1$ parties that sent a $key1$ message with that value.
Those $f+1$ parties' $key1$, $key1\_val$ and $prev\_key1$ fields then continue to support any lock set previously with another value.

\begin{restatable}{lemma}{openLockSupport}\label{lem:openLockSupport}
    Let $lock>0$ be some nonfaulty party's lock and $lock\_val$ be its value.
    If some nonfaulty party either has $prev\_key2\geq lock$ or $key2\geq lock$ and $key2\_val\neq lock\_val$, then there exist $f+1$ nonfaulty parties whose $key1$, $key1\_val$ and $prev\_key1$ fields support opening the lock.
\end{restatable}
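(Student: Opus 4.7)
My plan is to mirror Lemma~\ref{lem:goodKeySupport} one level down. I will first pin down a view $v\geq lock$ at which the premise-satisfying party sent some $\left<key2,w,v\right>$ message with $w\neq lock\_val$, then extract the $f+1$ nonfaulty parties that sent the underlying $\left<key1,w,v\right>$ messages at view $v$, and finally argue by induction over subsequent updates that those same $f+1$ parties continue to hold $key1$, $key1\_val$, and $prev\_key1$ fields satisfying one of the two conditions of Algorithm~\ref{alg:checkKey1}.

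For the case split, if $key2\geq lock$ and $key2\_val\neq lock\_val$, set $v:=key2$ and $w:=key2\_val$. Otherwise the premise forces $prev\_key2\geq lock$ and $key2\_val=lock\_val$ (using that $key2>prev\_key2$ always); in that subcase I take $v:=prev\_key2$. The fact that $prev\_key2$ is only reassigned inside the guard $key2\_val\neq val$, combined with the observation that the current $prev\_key2=v$ can have had no further $key2\_val$ changes after it was last set (else $prev\_key2$ would now be larger than $v$), lets me identify the value $w$ that the party sent in its $\left<key2,w,v\right>$ message at view $v$ and conclude $w\neq lock\_val$. In either case, to send $\left<key2,w,v\right>$ at view $v\geq lock$, the party must have previously received $\left<key1,w,v\right>$ messages from $n-f$ parties, at least $f+1$ of which are nonfaulty; let $I$ be that nonfaulty subset. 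Immediately after the $key1$ send at view $v$, each party in $I$ has $key1=v\geq lock$ and $key1\_val=w\neq lock\_val$, so the second condition of Algorithm~\ref{alg:checkKey1} is satisfied.

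It remains to verify that every party in $I$ continues to satisfy either $prev\_key1\geq lock$, or $key1\geq lock$ together with $key1\_val\neq lock\_val$. The key monotonicity is that $key1$ is non-decreasing after its first update to $v$, so $key1\geq lock$ is preserved forever. A later update that leaves $key1\_val$ unchanged preserves both disjuncts trivially; a later update that changes $key1\_val$ sets $prev\_key1$ to the current $key1$, which is already $\geq v\geq lock$, and $prev\_key1$ only grows thereafter, so the first disjunct holds from that point on. The main obstacle is the $prev\_key2\geq lock$ branch of the case split, where one must use the precise form of the guard that controls updates to $prev\_key2$ to extract a view $\geq lock$ at which a $key2$ message with value different from $lock\_val$ was actually sent; once that view is in hand, the rest is the same bookkeeping as in Lemma~\ref{lem:goodKeySupport}.
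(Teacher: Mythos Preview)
Your proposal is correct and follows essentially the same approach as the paper: isolate a view $v\geq lock$ at which the witness party sent a $key2$ message with a value $w\neq lock\_val$, pull out the $f+1$ nonfaulty senders of the underlying $\langle key1,w,v\rangle$ messages, and then run the same monotonicity induction on their $key1$/$prev\_key1$ fields. The only minor difference is in the case split for the $prev\_key2\geq lock$ branch: the paper keeps the two disjuncts of the premise as (possibly overlapping) cases and, when $prev\_key2\geq lock$, looks at \emph{both} views $prev\_key2$ and $key2$, uses that the values sent there differ, and selects whichever one is $\neq lock\_val$; you instead make the cases disjoint so that in your second case you already have $key2\_val=lock\_val$, which lets you go straight to $v=prev\_key2$ and conclude $w\neq lock\_val$ directly. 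Either organization yields the same set $I$ and the same inductive bookkeeping.
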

\begin{proof}
    Let $i$ be a nonfaulty party such that either $prev\_key2\geq lock$ or $key2\geq lock$ and $key2\_val\neq lock\_val$.
    If $key2\geq lock>0$ and $key2\_val\neq lock\_val$, $i$ received a $\left<key1,key1\_val,key2\right>$ message from $n-f$ parties in view $key2$.
    Out of those $n-f$ parties, at least $f+1$ are nonfaulty.
    On the other hand, if $prev\_key2\geq lock>0$, then for some pair of values $val,val'$ such that $val\neq val'$, $i$ received a $\left<key1,val,prev\_key2\right>$ message from $f+1$ nonfaulty parties in view $prev\_key2$ and a $\left<key1,val',key2\right>$ message from $f+1$ nonfaulty parties in view $key2>prev\_key2\geq lock$.
    At least one of the values $val,val'$ must not equal $lock\_val$ because $val\neq val'$.
    In other words, in both cases there exist $f+1$ nonfaulty parties that sent a $\left<key1,val,v\right>$ in view $v$ such that $val\neq lock\_val$ and $v\geq lock$.
    Let $I$ be the set of those nonfaulty parties.
    
    We now prove by induction that for every $v'\geq v$, all of the parties in $I$ either have $prev\_key1\geq lock$ or $key1\geq lock$ and $key1\_val\neq lock\_val$.
    First, observe view $v$. As stated above, in view $v$ all of the parties in $I$ sent a $\left<key1,val,v\right>$ and thus set $key1=v\geq lock$ and $key1\_val=val\neq lock\_val$, if it wasn't already so.
    Now, assume the claim holds for all views $v''<v'$.
    Note that the values of $key1$ and $prev\_key1$ only grow throughout the run.
    This means that if $prev\_key1\geq lock$ in the beginning of view $v'$, this will also be true at the end of view $v'$.
    On the other hand, if that is not the case, then in the beginning of view $v'$, $key1\geq lock$ and $key1\_val\neq lock\_val$.
    If the value of $key1\_val$ isn't updated in view $v'$, then $key1$ can only grow and thus the claim continues to hold.
    On the other, if the value of $key1\_val$ is updated in view $v'$, then for some $val'\neq val$ the following updates take place: $key1\gets v',key1\_val\gets val',prev\_key1\gets key1$.
    By assumption, in the beginning of view $v'$, $key1\geq lock$, and thus after the update $prev\_key1\geq lock$, completing the proof.
\end{proof}

This final lemma ties the two previous lemmas together.
Once a nonfaulty party is chosen as primary after GST, the primary receives enough keys, and each one of them has enough support to be accepted.
Then, after the key is sent, every nonfaulty party either has a lock with the same value, or there is enough support to open its lock.
From this point on, the view progresses easily and all nonfaulty parties terminate.

\begin{restatable}{lemma}{firstGoodPrimary}\label{lem:firstGoodPrimary}
    Let $v$ be the first view with a nonfaulty primary that starts after GST\footnote{More precisely, by ``starting after GST'', we mean that the first time some nonfaulty party has $view\geq v$ is after GST.}.
    All nonfaulty parties decide on a value and terminate in view $v$, if they haven't done so earlier. 
    
    Furthermore, if all messages between nonfaulty parties are actually delayed only $\delta$ time until being received, they decide on a value and terminate in $O\left(\delta\right)$ time.
\end{restatable}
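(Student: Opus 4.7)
The plan is to trace execution through view $v$ round by round, using Lemmas~\ref{lem:abortPropagation}, \ref{lem:goodKeySupport}, \ref{lem:openLockSupport}, and \ref{lem:terminationPropagation}, together with a standard quorum-intersection argument.

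First, by Lemma~\ref{lem:abortPropagation} every nonfaulty party that has not yet terminated enters view $v$ within $2\Delta$ after the first nonfaulty party does. From that moment on, all nonfaulty parties exchange $\langle request, v\rangle$ messages, so within another $\Delta$ the condition $highest\_request[j]=v$ holds at every nonfaulty $i$ for every nonfaulty $j$, firing all $send\_all\_upon\_join$ callbacks. Each nonfaulty party then transmits its $\langle suggest, \ldots \rangle$ to the nonfaulty primary and its $\langle proof, \ldots \rangle$ to every party.

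Second, I claim the primary eventually proposes some $(k, val)$ with $k$ at least as large as any nonfaulty party's lock. By Lemma~\ref{lem:goodKeySupport}, every nonfaulty party's reported $key3$ is supported by the $(key2, key2\_val, prev\_key2)$ triples of at least $f+1$ nonfaulty $suggest$ messages, so $accept\_key$ returns true on every nonfaulty suggestion once the corresponding $f+1$ nonfaulty $suggest$ messages arrive; after GST this takes at most one more $\Delta$. Hence within $O(\Delta)$ time the primary's $suggestions$ set contains every nonfaulty suggestion, and it picks the entry with maximal $k$. Now fix a nonfaulty party $p$ with $lock = L > 0$ and $lock\_val = V$. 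When $p$ set $lock = L$ it observed a quorum $Q_1$ of $n-f$ parties whose $key3$ messages it received in view $L$; every nonfaulty member of $Q_1$ has $key3 \geq L$ forever. Let $T$ be the set of nonfaulty parties whose suggestions are in the primary's accepted set at the moment it proposes, so $|T| \geq n - 2f \geq f+1$. Then $|Q_1 \cap T| \geq (n-f) + (f+1) - n = 1$, so some nonfaulty $q \in Q_1 \cap T$ contributes a suggestion with $q.key3 \geq L$, giving $k \geq L$.

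Third, I show every nonfaulty party responds to $\langle propose, k, val, v\rangle$ with an $echo$ message. If $V = val$ then $p$ echoes immediately. Otherwise, since $v > k \geq L$, party $p$ reaches the $open\_lock$ check. The primary accepted $(k, val)$ only because $f+1$ key2-proofs in its collection supported it, so at least one such proof came from a nonfaulty party $q'$ with either $prev\_key2 \geq k \geq L$, or $key2 \geq k \geq L$ and $key2\_val = val \neq V$. In either case $q'$ satisfies the hypothesis of Lemma~\ref{lem:openLockSupport} for $p$'s lock, so $f+1$ nonfaulty parties' $proof$ messages (already sent to $p$ via $send\_all\_upon\_join$ and delivered within $\Delta$ after GST) cause $open\_lock$ to return true, and $p$ echoes $val$.

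Fourth, with every nonfaulty party echoing the same $val$, each subsequent $n-f$-threshold round (echo, $key1$, $key2$, $key3$, $lock$, $done$) advances at every nonfaulty party after a further $\Delta$. Summing these one-way message delays, every nonfaulty party decides within \viewLength time of entering view $v$, so no timer fires first; even if some laggard does time out, Lemma~\ref{lem:terminationPropagation} guarantees its termination after receiving the $done$ messages of parties that decided in view $v$. For the responsive clause, the identical argument with $\Delta$ replaced by the actual delay $\delta$ yields total decision time $O(\delta)$ from the moment a nonfaulty party enters view $v$.

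The main obstacle is the quorum-intersection argument of step two, tying the lock-forming quorum in view $L$ to the primary's accepted suggestions in view $v$: this is the one place where the specific $n-f$ and $f+1$ thresholds must line up, and once $k \geq L$ is established the rest of the proof is essentially a bookkeeping walk through the algorithm invoking Lemmas~\ref{lem:goodKeySupport} and \ref{lem:openLockSupport} almost syntactically.
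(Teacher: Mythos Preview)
Your proposal is correct and follows essentially the same approach as the paper: use Lemma~\ref{lem:abortPropagation} to get everyone into view $v$, Lemma~\ref{lem:goodKeySupport} to show the primary accepts every nonfaulty suggestion, a quorum-intersection argument to show the proposed $k$ dominates every nonfaulty lock, and Lemma~\ref{lem:openLockSupport} to open any conflicting lock, then walk through the remaining rounds. One small gap worth closing explicitly: you write that every nonfaulty party ``that has not yet terminated'' enters view $v$, but then tacitly assume \emph{all} nonfaulty parties participate when arguing the $n-f$ thresholds are met; the paper disposes of this up front by noting that if any nonfaulty party has already terminated, Lemma~\ref{lem:terminationPropagation} immediately finishes the argument, so one may assume none has.
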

\begin{proof}
    If some nonfaulty party terminated previously, from Lemma~\ref{lem:terminationPropagation}, all nonfaulty parties terminate in $2\Delta$ time after GST.
    Now assume no nonfaulty party previously terminated.
    Since $v$ starts only after GST, it is larger than the largest view some nonfaulty party is in at GST.
    In that case, from Lemma~\ref{lem:abortPropagation} every nonfaulty party eventually starts view $v$.
    Furthermore, from the second part of Lemma~\ref{lem:abortPropagation} they all start view $v$ no later than $2\Delta$ time after the first nonfaulty party that starts it.
    
    In the beginning of view $v$ all nonfaulty parties compute a nonfaulty primary $i$ for view $v$.
    Afterwards, the parties call $view\_change\left(v,i\right)$ and send a $\left<request,v\right>$ message to all nonfaulty parties.
    As stated above, no nonfaulty party started any view $v'>v$ yet, so no nonfaulty party received a $\left<request,v'\right>$ message for any $v'>v$ from any other nonfaulty party.
    After $\Delta$ time, the $request$ message sent by every nonfaulty party $j$ is received by all nonfaulty parties, and then they update $highest\_request\left[j\right]=v$.
    At that point, in every $send\_all\_upon\_join$ call, every nonfaulty party sends every other nonfaulty party the relevant message.
    Also note that no nonfaulty party sends any of the messages from its current view before receiving a $request$ message.
    Since the $request$ messages are sent only after starting the view, no relevant message sent by a nonfaulty party is discarded in this view.
    After receiving the primary's $request$ message, all nonfaulty parties send $i$ a $suggest$ message consisting of their $key3$ and $key3\_val$ fields, as well as $key2$, $key2\_val$ and $prev\_key2$ fields.
    In addition, they all send a $proof$ message message to all parties.
    Observe one $suggest$ message sent by party $j$.
    The primary $i$ receives that message in $\Delta$ time at most.
    If $key3=0$, $i$ adds $(key,key\_val)$ to $suggestions$.
    Note that whenever a nonfaulty party updates $key2$ and $prev\_key2$ it sets $prev\_key2<key2$.
    This means that $i$ adds $\left(key2,key2\_val,prev\_key2\right)$ to $key2\_proofs$.
    From Lemma~\ref{lem:goodKeySupport}, every nonfaulty party's $key3$, $key3\_val$ pair has $f+1$ supporting $suggest$ messages from nonfaulty parties, which means that $i$ eventually adds the tuple $\left(key3,key3\_val\right)$ to $suggestions$ for each message received from a nonfaulty party.
    Then, $i$ sees that $\left|suggestions\right|\geq n-f$, picks some tuple $\left(key,val\right)\in suggestions$ with a maximal $key$, and sends the message $\left<propose,key,val,v\right>$ to all parties.
    
    All nonfaulty parties receive that $propose$ message from the primary in $\Delta$ more time, as well as the $proof$ message sent by all nonfaulty parties.
    Now, observe some locked party $l$.
    If $l$ sees that $lock=0$ or $val=lock\_val$, it sends an $\left<echo,val,v\right>$ message to all nonfaulty parties.
    If that is not the case, $l$ has set $lock=v'>0$ in some view $v'<v$, and $lock\_val=val'$ for some $val'\neq val$.
    Before setting its lock in view $v'$, $l$ must have received a $\left<key3,val',v'\right>$ from $n-f$ parties.
    Out of those parties, $f+1$ are nonfaulty, and they set they $key3$ field to be $v'$.
    Note that nonfaulty parties only increase their $key3$ field throughout the protocol.
    This means that out of the $n-f$ suggestions received by $i$, at least one was for a $key3$ that is greater than $v'$.
    Since $i$ takes the tuple with the maximal key, we know that $key\geq v'$.
    
    Note that since $\left(key,val\right)\in suggestions$, if $key\neq 0$, $i$ received $f+1$ $suggest$ messages that support the pair of values. 
    At least one of those messages was sent by a nonfaulty party $j$.
    Observe $j$'s message supporting the pair $key$, $val$.
    The message contains $j$'s $key2$, $key2\_val$ and $prev\_key2$ fields.
    The message supports the pair $key$ and $val$ and thus it is either the case that $prev\_key2\geq key\geq lock$ or that $key2\geq lock$ and $key2\_val=val\neq lock\_val$.
    From Lemma~\ref{lem:openLockSupport}, there are $f+1$ parties whose $key1$, $key1\_val$ and $prev\_key1$ fields support opening $l$'s lock, and thus they send $proof$ messages that support opening the lock.
    Party $l$ receives those messages and then sends an $\left<echo,val,v\right>$ message as well.
    
    All nonfaulty parties receive the $echo$ message from $n-f$ parties $\Delta$ time later and then send a $\left<key1,val,v\right>$ message to all parties.
    Similarly, after $\Delta$ more time all nonfaulty parties receive a $\left<key1,val,v\right>$ message from all nonfaulty parties and send $\left<key2,val,v\right>$ message.
    The same argument can be made for sending a $key3$ message, a $lock$ message, and a $done$ message, each taking $\Delta$ more time.
    Finally, after $\Delta$ more time, all parties receive the $done$ messages.
    It is important to note that if some nonfaulty party has already sent a $\left<done,val',v'\right>$ message in view $v'<v$, then from Corollary~\ref{col:uniqueDone}, it does so with $val=val'$.
    After receiving those $done$ messages from $n-f$ parties, all nonfaulty parties decide on a value and terminate.
    
    The overall time for all of the steps is $2\Delta+9\cdot\Delta=\viewLength$.
    During that time, no nonfaulty party sends an $\left<abort,v\right>$ message, so all nonfaulty parties reach the point in which they decide on $val$ without having to change view.
    If all of those messages are actually delivered in $\delta$ time or less, then the previous calculations can be done with $\delta$ instead, finding that all nonfaulty decide on a value in $11\delta=O\left(\delta\right)$ time instead.
\end{proof}

\subsection{Main Theorem}

Using the previous lemmas, it is now possible to prove the main theorem:
\begin{theorem}
    Algorithm~\ref{alg:BA} is a Byzantine Agreement protocol in partial synchrony resilient to $f<\frac{n}{3}$ Byzantine parties.
\end{theorem}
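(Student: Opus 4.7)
The plan is to verify each of the three Byzantine Agreement properties in turn, leaning on the safety and liveness lemmas already established.

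For Correctness, let $i$ and $j$ be two nonfaulty parties that decide on values $val$ and $val'$. Each received $\langle done, \cdot\rangle$ messages from $n-f$ distinct parties carrying its respective decided value. The two sender sets live inside $[n]$, so they intersect in at least $n-2f$ parties. Since $n>3f$ we have $n-2f\geq f+1$, hence the intersection contains at least one nonfaulty party $k$, who sent at most one distinct $done$ message. By Corollary~\ref{col:uniqueDone}, any two $done$ messages sent by nonfaulty parties carry the same value, so $val=val'$.

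For Termination, I would split on whether any nonfaulty party ever terminates. If some nonfaulty party $i$ terminates at all, Lemma~\ref{lem:terminationPropagation} forces every nonfaulty party to terminate within $2\Delta$ after $\max(\text{GST}, t_i)$. Otherwise, by Lemma~\ref{lem:abortPropagation}, from GST onward the nonfaulty parties never get stuck: for every view $v'$ strictly greater than the largest view held by any nonfaulty party at GST, every nonfaulty party starts $v'$ (or already terminated). Because primaries are assigned in round-robin fashion and there are at least $n-f$ nonfaulty candidates, some view $v^\ast$ after GST has a nonfaulty primary. By Lemma~\ref{lem:firstGoodPrimary}, all nonfaulty parties decide and terminate in view $v^\ast$, contradicting the assumption that no nonfaulty party ever terminates. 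Hence all nonfaulty parties eventually decide.

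For Validity, suppose every party is nonfaulty and every input equals $val$. I would prove by induction over views that every $suggest$, $propose$, $echo$, $key_j$, $lock$, and $done$ message that any party sends carries value $val$, and that every party's local $key_j\_val$ and $lock\_val$ fields equal $val$ throughout. Initially all these fields are set to $x_i=val$. Each view begins with every $suggest$ message reporting value $val$ (by the inductive hypothesis), so the nonfaulty primary, picking some tuple from $suggestions$ with maximal key, necessarily picks one with value $val$ and sends $\langle propose,\cdot,val,v\rangle$. Since every party's lock, if any, has $lock\_val=val$, every party sends $\langle echo,val,v\rangle$, and the cascade of $key1$, $key2$, $key3$, $lock$, and $done$ messages inherits the value $val$, preserving the invariant. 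Any party that decides thus decides $val$.

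The main obstacle I expect is the termination argument, because one has to cleanly handle the case in which a nonfaulty party terminates before GST while others are still stuck in early asynchronous views; it is essential to invoke Lemma~\ref{lem:terminationPropagation} before the round-robin-plus-Lemma~\ref{lem:firstGoodPrimary} argument, so that the latter is applied only in the regime where no nonfaulty party has yet decided. The other two properties are largely a matter of collecting the already-proved safety corollaries (for correctness) and tracking a simple "only $val$ ever appears" invariant (for validity).
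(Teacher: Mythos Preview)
Your proposal is correct and follows essentially the same approach as the paper: Correctness via Corollary~\ref{col:uniqueDone}, Termination via Lemmas~\ref{lem:abortPropagation} and~\ref{lem:firstGoodPrimary} combined with the round-robin primary rotation, and Validity via an inductive invariant that only the common input value ever propagates through the protocol messages. The paper's Validity invariant is slightly leaner (tracking just $key3\_val$ and the value carried by $key1$ messages, then appealing to Corollary~\ref{col:singleValue}) and its Termination argument folds your explicit case split into the first line of Lemma~\ref{lem:firstGoodPrimary}, but these are purely presentational differences.
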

\begin{proof}
    We prove each property individually.
    
    \textbf{Correctness.} Observe two nonfaulty parties $i,j$ that decide on the values $val,val'$ respectively.
    Party $i$ first received a $\left<done,val\right>$ message from $n-f$ parties, and $j$ received a $\left<done,val'\right>$ message from $n-f$ parties.
    Since $n-f>f$, $i$ and $j$ receive at least one of their respective messages from some nonfaulty party.
    From Corollary~\ref{col:uniqueDone}, all nonfaulty parties that send a $done$ message do so with the same value.
    Therefore, $val=val'$.

    \textbf{Validity.} Assume that all parties are nonfaulty and that they have the same input $val$.
    We will prove by induction that for every view $v$, every nonfaulty party has $key3\_val=val$.
    Furthermore, if some nonfaulty party sends a $\left<key1,val',v\right>$ message, then $val'=val$.
    First, all parties set $key3\_val$ to be $val$ in the beginning of the protocol.
    Assume the claim holds for every $v'<v$.
    In the beginning of view $v$, the primary calls the $view\_change$ protocol.
    Before completing $view\_change$, the primary receives $suggest$ messages from $n-f$ parties with their $key3\_val$ field.
    Since all parties are nonfaulty, they all send the $key3\_val$ they have at that point, and from the induction hypothesis $key3\_val=val$.
    This means that if the primary completes the $view\_change$ protocol, it sees that for every $\left(key,key\_val\right)\in suggestions$, $key\_val=val$ and thus if the primary sends a $propose$ message it sends the message $\left<propose,val,key,v\right>$ to all parties.
    Now, every nonfaulty party that sends a $key1$ message sends the message $\left<key1,val,v\right>$.
    From Corollary~\ref{col:singleValue}, every nonfaulty party that sends a $\left<key3,val',v\right>$ message, does so with $val'=val$.
    If a nonfaulty party updates $key3\_val$ to a new value $val'$, it also sends a $\left<key3,val',v\right>$ message.
    However, as shown above the only value sent in such a message is $val$ so no nonfaulty party updates its $key3\_val$ field to any other value.
    Using Corollary~\ref{col:singleValue}, every nonfaulty party that sends a $lock$ message does so with the value $val$.
    This means that any party that sends a $done$ message in line~\ref{line:doneMessage}, does so with the value $val$.
    Clearly any party that sends a $done$ message in line~\ref{line:doneEcho} does so with the value $val$ as well, because it never receives $done$ messages with any other value.
    Finally, this means that every nonfaulty party that decides on a value decides on $val$.
    
    \textbf{Termination.} 
    Observe the system after GST, and let $v$ be the highest view that some nonfaulty party is in at that time.
    From Lemma~\ref{lem:abortPropagation}, all nonfaulty parties either terminate or participate in every view $v'>v$.
    Since the primaries are chosen in a round-robin fashion, after no more than $f+1$ views, some nonfaulty party starts a view with a nonfaulty primary.
    From Lemma~\ref{lem:firstGoodPrimary}, all nonfaulty parties either terminate in that view or earlier.
\end{proof}

\subsection{Complexity Measures}

The main complexity measures of interest are round complexity, word complexity, and space complexity.

\textbf{Word complexity.} In IT-HS in every round, every party sends at most $O(1)$ words to every other party. We assume that a \emph{word} is large enough to contain any counter or identifier. This implies that just $O(n^2)$ words are sent in each round. 

\textbf{Round complexity.} As IT-HS is a primary-backup view-based protocol (like Paxos and PBFT), there are no bounds on the number of rounds while the system is still asynchronous.
Therefore, we use the standard measure of counting the number of rounds and number of words sent after GST.
Furthermore, in order to be useful in the task of agreeing on many values, a desirable property is \emph{optimistic responsiveness}: when the primary is nonfaulty and the network delay is low, all nonfaulty parties complete the protocol at network speed.
This desire is captured in the next definition:
\begin{definition}[Optimistic Responsivness]
    Assume all messages between nonfaulty parties are actually delivered in $\delta <\Delta$ time.
    The protocol is said to be \emph{optimistically responsive} if all nonfaulty parties complete the protocol in $O\left(\delta\right)$ time after a nonfaulty primary is chosen after GST.
\end{definition}

\textbf{Space complexity}. We separate the local space complexity into two types: persistent memory and transient memory.  
In this setting, parties can crash and be rebooted.
Persistent memory is never erased, even in the event of a crash, while transient memory can be erased by a reboot event. 
IT-HS requires asymptotically optimal $O(1)$ persistent storage (measured in words) and just $O(1)$ transient memory per communication channel (so a total of $O(n)$ transient memory).

After a reboot, nonfaulty parties can ask other parties to send messages that help recover information needed in their transient memory.
In this setting we assume that all nonfaulty parties that terminate still reply to messages asking for previously sent information.

\begin{theorem}\label{thm:complexity}
    During Algorithm~\ref{alg:BA}, each nonfaulty party sends a constant number of words to each other party in each view and requires $O\left(n\right)$ memory overall, out of which $O(1)$ is persistent memory.
    Furthermore, the protocol is optimistically responsive.
\end{theorem}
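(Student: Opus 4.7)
The plan is to separately verify each of the three claims: constant word-per-party-per-view complexity, $O(n)$ total memory with $O(1)$ persistent, and optimistic responsiveness. The last of these is essentially immediate from the second half of Lemma~\ref{lem:firstGoodPrimary}, so the real content lies in the bookkeeping for the first two.

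For the word complexity, I would enumerate every message type that can appear in a view, namely $request$, $suggest$, $proof$, $propose$, $echo$, $key1$, $key2$, $key3$, $lock$, $done$, and $abort$, and observe that each carries only a constant number of fields: a tag, a view number, at most one value, and at most a few stored view counters such as $prev\_key1$ or $prev\_key2$. Since a word is assumed large enough to store any counter, identifier, or value, each message is $O(1)$ words. Then I would argue that in a given view, each party sends each type of message to any fixed recipient at most a bounded number of times: the $send\_all\_upon\_join$ calls fire at most once per recipient per message, the $suggest$ messages are sent only to the primary, and the $abort$ echoes in $check\_progress$ are bounded because $highest\_abort[i]$ is monotone. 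Summing over the constantly many message types gives $O(1)$ words per pair per view, and hence $O(n^2)$ in total.

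For space, I would split the state into the persistent fields $lock$, $lock\_val$, $key3$, $key3\_val$, $key2$, $key2\_val$, $prev\_key2$, $key1$, $key1\_val$, $prev\_key1$, and $view$, each a single word, giving $O(1)$ persistent storage; the bound on values follows from the assumption that views, identifiers and commands fit in a word. For transient storage I would point to the arrays $highest\_request$ and $highest\_abort$ of length $n$, together with the per-view sets $suggestions$, $key2\_proofs$, and $proofs$, each of which stores only the first accepted message from each sender and hence has size $O(n)$. Freeing the per-view memory at the top of the outer while loop (as the pseudocode explicitly does) ensures these sets do not accumulate across views. The main obstacle here will be justifying that nothing else needs to be remembered across views: I would invoke the ``ignore messages from other views, other than $abort$, $done$ and $request$ messages'' line to argue that no buffering of obsolete views is required, and note that $request$ and $abort$ messages are collapsed into the two $highest\_*$ arrays rather than stored individually.

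Finally, for optimistic responsiveness, I would simply quote Lemma~\ref{lem:firstGoodPrimary}: once GST has passed and a nonfaulty primary is chosen, the proof of that lemma shows that all nonfaulty parties decide in $11\delta = O(\delta)$ time when actual message delay is $\delta \le \Delta$. This matches the definition of optimistic responsiveness and completes the theorem.
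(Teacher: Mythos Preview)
Your proposal is correct and follows essentially the same approach as the paper: enumerate the finitely many message types (each of $O(1)$ words) for the per-view word bound, list the constant-size persistent fields versus the $O(n)$ per-view and $highest\_*$ arrays for the space bound, and cite Lemma~\ref{lem:firstGoodPrimary} for optimistic responsiveness. The only notable addition in the paper's proof is that it also includes the messages sent in the current view and the last $done$/$request$/$abort$ messages among the persistent state, and it briefly sketches a reboot-recovery protocol (re-requesting those messages from peers) to justify that $O(1)$ persistent storage truly suffices across crashes; you may want to add a sentence to that effect, but it does not change the asymptotics or the structure of the argument.
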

\begin{proof}
    First note that each view consists of one message sent from all parties to the primary, one message sent from the primary to all parties, and a constant number of all-to-all communication rounds.
    In addition, each message consists of no more than $7$ words.
    Overall, each party only sends a constant number of messages to every party, each with a constant number of words.
    In each view, every nonfaulty party needs to remember which messages were sent to it by other parties, as well as a constant amount of information about every $suggest$ and $proof$ message.
    Since a constant number of words and messages is sent from each party to every other party, this requires $O(n)$ memory.
    Note that once a new view is started, all of the information stored in the previous call to $view\_change$ and $process\_messages$ is freed.
    Other than that, every nonfaulty party allocates two arrays of size $n$, a constant number of other fields, and needs to remember the first $done$ messages received from every other party.
    This also requires $O(n)$ memory.
    Overall, the only fields that need to be stored in persistent memory are the $view$, $lock$, $lock\_val$, and various $key$, $key\_val$ and $prev\_key$ fields, as well as the messages it sent in the current view, and the last $done$, $request$ and $abort$ messages it sent.
    This is a constant number of fields, in addition to a constant number of messages.
    After being rebooted, a nonfaulty party $i$ can ask to receive the last $done$, $request$, and $abort$ messages sent by all nonfaulty parties to restore the information it lost that doesn't pertain to any specific view, and any message sent in the current view.
    In addition, it sends a $request$ message for its current view.
    Upon receiving such a message, a nonfaulty party $j$ replies with the last $done$, $request$ and $abort$ messages it sent.
    In addition, if $j$ is in the view that party $i$ asked about, it also re-sends the messages it sent in the current view.
    Note that this is essentially the same as $i$ receiving messages late and starting its view after being rebooted, and thus all of the properties still hold.
    The fact that the protocol is optimistically responsive is proven in Lemma~\ref{lem:firstGoodPrimary}.
\end{proof}

\section{Multi-Shot Byzantine Agreement and State Machine Replication}

This section describes taking a Byzantine Agreement protocol and using it to solve two tasks that are natural extensions of a single shot agreement.
Both tasks deal with different formulations for the idea of agreeing on many values, instead of just one.

\subsection{State Machine Replication with Stable Leader (a la PBFT)}

In the task of State Machine Replication \cite{schneider1990SMR}, all parties (called replicas) have knowledge of the same state machine.
Each party receives a (possibly infinite) series of instructions to perform on the state machine as input.
The goal of the parties is to all perform the same actions on the state machine in the same order.
More precisely, the parties are actually only interested in the state of the state machine, and aren't required to see all of the intermediary states throughout computation.
In order to avoid trivial solutions, if all parties are nonfaulty and they have the same $s$'th instruction as input, then they all execute it as the $s$'th instruction for the state machine.
This task can be achieved utilizing any Byzantine Agreement protocol, using ideas from the PBFT protocol.

In addition to the inputs, the protocol is parameterized by a window size $\alpha$.  
All parties participate in $\alpha$ instances of the Byzantine Agreement protocol, each one tagged with the current decision number.
After each decision, every party saves a log of their current decision, and updates the state machine according to the decided upon instruction.
Then, after every $\frac{\alpha}{2}$ decisions, each party saves a ``checkpoint'' with the current state of the state machine, and deletes the log of the $\frac{\alpha}{2}$ oldest decisions.
Then, before starting the next $\frac{\alpha}{2}$ decisions, every party sends its current checkpoint and makes sure it receives the same state from $n-f$ parties using techniques similar to Bracha broadcast.
Furthermore, as long as no view fails, the primary isn't replaced.
This means that eventually at some point, either there exists a faulty primary that always acts like a nonfaulty primary, or a nonfaulty primary is chosen and is never replaced.
Both sending the checkpoints and replacing faulty leaders require more implementation details which can be found in \cite{castro2001PBFT}. 

Using these techniques, all parties can decide on $O\left(\alpha\right)$ instructions at a time, improving the throughput of the algorithm.
The communication complexity per view remains similar to the communication complexity of the IT-HS algorithm, but once a nonfaulty primary is reached after GST, all invocations of the protocol require only one view to terminate.
Alternatively, if a nonfaulty primary is never reached after GST, a faulty party acts like a nonfaulty primary indefinitely, which yields the same round complexity.
Finally, if we assume that a description of the state machine requires $O\left(S\right)$ space, the protocol now requires $O(S + \alpha)$ persistent space in order to store the checkpoints and store the $O(1)$ state for each slot in the window.
In addition, the protocol requires $O\left(\alpha\cdot n+S\right)$ transient space in order to store the information about all active calls to IT-HS, the $\alpha$ decisions in the log, and a description of the current state of the state machine.

\subsection{Multi-Shot Agreement with Pipelining (a la HotStuff)}
In contrast, we can take the approach of HotStuff \cite{abraham2019hotstuff} and solve the task of multi-shot agreement. 
In this task, party $i$ has an infinite series of inputs $x_i^1,x_i^2,\ldots$, and the goal of the parties is to agree on an infinite number of values.
Each decision is associated with a slot which is the number $s\in\mathbb{N}$ of the decision made.
Each one of these decisions is required to have the agreement properties, i.e.: eventually all nonfaulty parties decide on a value for slot $s$, they all decide on the same value, and if all parties are nonfaulty and have the same input $val$ for slot $s$, the decision for the slot is $val$.

A naive implementation for this task is to sequentially call separate instances of IT-HS for every slot $s\in\mathbb{N}$, each with the input $\left(s,x_i^s\right)$. In order to improve the throughput of the protocol, after completing an instance of the IT-HS protocol, the parties can continue with the next view and the next primary in the round-robin.
This slight adjustment ensures that after GST, $n-f$ out of every $n$ views have a decision made, and if messages between nonfaulty parties are only delayed $\delta$ time, each one of those views requires only $O(\delta)$ time to reach a decision.
Slight adjustments need to be made in that case so that $abort$ messages are sent about views regardless of the slot, so that all parties continue participating in the same views throughout the protocol.
In addition, messages about different slots need to be ignored.

In the case of the optimistic assumption that most parties are nonfaulty, a significantly more efficient alternative can be gleaned from the HotStuff protocol. 
This alternative uses a technique called \textit{pipelining} (or chaining).
Roughly speaking, in this technique, all parties start slot $s$ by appending messages, starting on the second round (round, not view) of slot $s-1$.
In the case of HT-IS, the protocol can be changed so that $suggest$ messages are sent to all parties, and then each party starts slot $s$ after receiving $n-f$ $suggest$ messages in slot $s$.
Note that the exact length of timeouts needs to be slightly adjusted, and the details can be found in \cite{abraham2019hotstuff}.
In slot $s$, a nonfaulty primary appends its current proposal to the proposal it heard in slot $s-1$.
Then, before deciding on a value in slot $s$, parties check that the decision values in the previous slots agree with the proposal in slot $s$.
If they do, then the parties agree on the value in this slot as well.
In this protocol, each view lasts for \viewLength time, so if at some point a primary sees that a proposal from $11$ views ago failed, it appends its proposal to the first one that it accepted from a previous view.
After GST, if there are $m+11$ nonfaulty primaries in a row, then the last $m$ primaries are guaranteed to complete the protocol, and thus add $m$ decisions in $(m+11)\Delta$ time instead of in $m\cdot\viewLength$ time.
This means that in the optimistic case that a vast majority of parties are nonfaulty, the throughput of this protocol is greatly improved as compared to the naive implementation.
In this protocol the communication complexity per view is still $O(n^2)$ messages, but a larger number of words.
However, note that it is not always the case that if a nonfaulty primary is chosen, its proposal is accepted.
To obtain bounded memory requirement one needs to add a checkpointing mechanism, similar to PBFT.  
As in PBFT,  only $O(n)$ transient space and $O(1)$ persistent space are required per decision in addition to the log of the decisions.

\bibliographystyle{plain}
\bibliography{bibfile}

\end{document}